\documentclass[10pt,journal,  oneside, twocolumn]{IEEEtran}

\usepackage{multirow}
\usepackage{latexsym}
\usepackage{graphicx}
\usepackage{float}
\usepackage{amsmath}
\usepackage{amsthm}
\usepackage{lipsum}
\usepackage{subfig}
\usepackage{graphicx}
\usepackage{authblk}
\usepackage{bm}
\usepackage{amsthm}
\usepackage[section]{placeins}
\usepackage{wasysym}

\usepackage{colortbl}

\usepackage{enumitem}

\newtheorem{theorem}{Theorem}

\newtheorem{lemma}{Lemma}

\makeatletter  
\newif\if@restonecol  
\makeatother

\usepackage[linesnumbered,ruled,vlined]{algorithm2e}
\usepackage{algpseudocode}  
\usepackage{amsmath}

\usepackage{amssymb}

\usepackage{mathrsfs}
\usepackage{subfig}
\usepackage{caption}
\captionsetup[figure]{labelfont={bf,small},textfont={it,small}}
\captionsetup[subfloat]{labelfont={bf,small},textfont={it,small},
subrefformat=parens} 

\hyphenation{op-tical net-works semi-conduc-tor}

\bibliographystyle{IEEEtran}

\begin{document}

\title{RIS-empowered Topology Control for \\Distributed Learning in Urban Air Mobility}


\author{Kai Xiong, Rui Wang, Supeng Leng,~\IEEEmembership{Member,~IEEE}, Wenyang Che, Chongwen Huang,~\IEEEmembership{Member,~IEEE}, \\Chau Yuen,~\IEEEmembership{Fellow,~IEEE}

\thanks{

K. Xiong,  R. Wang, S. Leng, and W. Che are with School of Information and Communication Engineering, University of Electronic Science and Technology of China, Chengdu, 611731, China; and, Shenzhen Institute for Advanced Study, University of Electronic Science and Technology of China, Shenzhen, 518110, China.
}

\thanks{
C. Huang is with College of Information Science and Electronic Engineering, Zhejiang University, Hangzhou 310027, China.
}

\thanks{
C. Yuen is with School of Electrical and Electronics Engineering, Nanyang Technological University, 639798, Singapore. 
}






\thanks{The corresponding author is Supeng Leng, email: \{spleng, xiongkai\}@uestc.edu.cn}
}


\maketitle


\begin{abstract}
Urban Air Mobility (UAM) expands vehicles from the ground to the near-ground space, envisioned as a revolution for transportation systems. 
Comprehensive scene perception is the foundation for autonomous aerial driving.
However, UAM encounters the intelligent perception challenge: high perception learning requirements conflict with the limited sensors and computing chips of flying cars. 
To overcome the challenge, federated learning (FL) and other collaborative learning have been proposed to enable resource-limited devices to conduct onboard deep learning (DL) collaboratively.
But traditional collaborative learning like FL relies on a central integrator for DL model aggregation, which is difficult to deploy in dynamic environments.
The fully decentralized learning schemes may be the intuitive solution while the convergence of distributed learning cannot be guaranteed.
Accordingly, this paper explores reconfigurable intelligent surfaces (RIS) empowered distributed learning, taking account of topological attributes to facilitate the learning performance with convergence guarantee.
We propose several FL topological criteria for optimizing the transmission delay and convergence rate by exploiting the Laplacian matrix eigenvalues of the communication network. 
Subsequently, we innovatively leverage the RIS link modification ability to remold the current network according to the proposed topological criteria. 
This paper rethinks the functions of RIS from the perspective of the network layer.
Furthermore, a deep deterministic policy gradient-based RIS phase shift control algorithm is developed to construct or deconstruct the network links simultaneously to reshape the communication network. Simulation experiments are conducted over MobileNet-based multi-view learning to verify the efficiency of the distributed FL framework.

\end{abstract}

\begin{IEEEkeywords}
Urban Air Mobility, Distributed Federated Learning, Topology Control, Reconfigurable Intelligent
Surface.

\end{IEEEkeywords}

\IEEEpeerreviewmaketitle

\section{Introduction}

\IEEEPARstart {R}{oad} traffic congestion remains a major hazard that causes heavy damage to the economy and life quality of the metropolises. 
Urban Air Mobility (UAM) is a future aerial transportation system for alleviating ground traffic congestion, counting on flying car technology. 
National Aeronautics and Space Administration (NASA) pointed out that the UAM vehicle should safely and intelligently aviate in metropolitan areas.
Wherein flight safety stands as the top priority that forms the foundation for any design of flying cars \cite{Ehang3248}. 
Due to monitoring the environment being a prerequisite for safe aerial driving, the essential function of the flying car prototype design involves intelligent perception.

Generally, comprehensive scene perception requires considerable multi-view data and computing resources. 
However, limited onboard sensors and chips restrict the perception capacity of a vehicle.
An alternative way is to achieve collaborative perception with neighbors, enhancing vehicular data collection and computing capability.
But urban air communication often encounters blocking caused by skyscrapers and experiences high path loss.
Reconfigurable Intelligent Surface (RIS) has recently emerged as an obstacle communication solution for future sixth-generation wireless communication systems.
It can reflect the signals toward users who experience direct link blockages due to obstacles \cite{9831036Alsenwi}.
Besides, RIS has a compact size and low power consumption due to the reflective trait and straightforward structure. 
It suitably mounts on the building facades and enables resource-limited devices, like flying cars and unmanned aerial vehicle \cite{10278101Hongyang}, to promote the information exchange with each other \cite{{9599478Neelima},{9899368Yaxuan},{10041955Nouman},{10032196Hongyang}}.


An emerging collaborative learning approach, Federated Learning (FL), serves resource-limited devices without sharing the raw sensing data with others.
However, the central integrator is another difficulty that hinders the application of FL in UAM.
Maintaining an integrator in the air is impractical for a highly dynamic scenario \cite{9079513Shiva}.
The existing central integrator may also invoke a network bottleneck, significantly raising the communication overheads \cite{9716076Afaf}.
One alternative solution is to perform the deep learning (DL) model aggregation fully distributed without the integrator involved, where flying cars share their local DL model parameters with communication neighbors.
Nevertheless, the shared parameter consensus is a fundamental provision for distributed learning optimization \cite{{9354489Xindi}, {9525810Zhao}}.
Inconsistent parameters among distributed participants will corrupt the distributed learning performance.
Therefore, platoon vehicles must agree upon the shared DL model parameters with neighbors in a distributed fashion.

Additionally, many studies have revealed that the end-to-end transmission delay and communication topology have a crucial impact on the consensus of multi-agent systems \cite{{8568992Qian}, {1470321Saber},{Saber1333204}}. 
Transmission delays and inappropriate topology may deteriorate the DL model propagation efficiency or render the system erratic.
Although the distributed FL inevitably faces delay-shift parameters and intricate convergence in wireless communication systems, retrospectives have yet to explore the topological solution to improve the DL model sharing in distributed learning.
As a result, it is a research blank to develop DL model parameter consensus and networking configurations for distributed learning.

Fortunately, RIS with massive reflective elements can tune the phase shifts of the impinging electromagnetic waves \cite{{10183797Bo}, {9722711Cheng}}. 
By programming the phase shifts of RIS elements, the reflected signals can be composited constructively at the desired receiver or destructively at non-intended receivers \cite{9903846Hehao}.
We can manipulate the RIS to construct and deconstruct the reflected signals to reshape the communication topology and achieve the desired communication rate for distributed learning.


Thus, this paper proposes a RIS-based topological optimization scheme to improve the convergence and learning performance of distributed FL.
Hereafter, we explore the MobileNet-based multi-view learning to evaluate the proposed distributed FL framework.
The multi-view learning with distributed FL framework can deploy fully distributedly and fast converge with loose transmission delay constraints.
Note that the above distributed FL and RIS-based topology optimization scheme can directly deploy on multi-agent systems to facilitate swarm intelligence.
The main contributions are summarized as follows:

\begin{itemize}

\item We propose several topological criteria to ensure distributed learning convergence and relieve the transmission delay requirement of the convergence. These criteria aim to minimize the largest eigenvalue and maximize the second smallest eigenvalue of the Laplacian matrix of the communication network. Wherein the largest eigenvalue confines the tolerable transmission delay for convergence, which is constrained by the maximum vertex degree of the network. Instead, the second smallest eigenvalue specifies the convergence speed of the distributed learning. 
Further, we originally proposed a neighbor structure to increase the second smallest eigenvalue. These proposed criteria are simple and straightly deployed in practice.


\item We design a RIS-empowered topology optimization scheme by considering the transmission capacity and the topological criteria to reshape the communication network of a platoon.
The proposed scheme construct and deconstruct the communication link by accommodating the transmission rate.
Based on the Laplacian matrix analysis, we derive the tolerable transmission delay for the convergence of distributed FL.
The tolerable delay specifies the transmission rate threshold of the constructed link, directing the RIS controls.

\item We develop a deep deterministic policy gradient (DDPG) method to optimize the RIS phase shift matrix for distributed learning.
The reward of the DDPG comprises the transmission bonus and the delay violation penalty. 
The delay violation penalty incentivizes each link to observe the tolerable delay threshold, which enhances the transmission rate of the constructed links and suppresses the deconstructed links. 
With increased RIS reflective elements, the DDPG method exhibits global optimization ability to attain the desired transmission rates of each vehicle.

\end{itemize}

The remainder of this paper is organized as follows. Section II reviews the related works. Section III presents the whole system architecture and proposes the RIS-empowered topology optimization scheme. Section IV provides the simulation results and the performance discussion. Finally, we draw the conclusion in Section V. 

\section{Related Work}

Engineers and researchers have begun proposing, examining, and developing flying cars for efficient transport.
Autonomous driving is crucial in flying car design, which relies on comprehensive onboard perception. 
Federated learning can further improve the onboard perception performance.
Nevertheless, federated learning highly relies on the communication topology and corresponding transmission delays.

\subsection{Flying Car}
Flying cars can run as regular vehicles on the road and aviate as Vertical TakeOff and Landing (VTOL) aircraft in the air \cite{7420801Kaushik}.
With governments, enterprises, and research institutes paying increasing attention to UAM, this new concept has developed fast.
A blue paper by Morgan Stanley estimates that the global UAM addressable market will reach US dollar $1.5$ trillion by 2040, which is on the same scale as the potential market of the autonomous vehicle \cite{Morgan2019}.  
Flying cars are expected to transit an extensive area at high speed according to passenger requests, which incurs a flying car experiencing diverse environments in its life cycle.
In addition, safe aerial autonomous driving relies on comprehensive surrounding perception.
A platoon of flying cars enables large-scale distributed perception learning with high precision and enlarges the perception knowledge by cooperation \cite{7452570Fang}. 
It indicates that collaborative learning can facilitate perception accuracy by integrating different trained DL models.


\subsection{Distributed Federated Learning}
Multi-agent collaborative learning has been widely investigated in transportation systems \cite{{9127823Chai},{10149027Xiaosha}}. 
A central integrator can merge learning models from multiple locally trained DL models to prompt the learning efficiency. 
This collaborative learning fashion is named Federated Learning (FL) \cite{9084352Tian}.
The authors in \cite{9181432kai} studied FL-based distributed communication technology selection in Internet-of-Vehicles.
The necessity of a central integrator in FL incurs new challenges to the participant locations, the integrator's stabilization, and the capacity of the transmission bottleneck \cite{{9802876Tianyang}, {9839387Longyu}}. The intuitive solution is to decentralize the DL model-sharing process.
However, the convergence of the fully distributed FL is not guaranteed. 
Many works have exposed that the communication topology and transmission delay are the crucial factors affecting the convergence of the multi-agent parameter sharing \cite{{1470321Saber},{Saber1333204}}. But how to devise a communication topology to prompt the distributed FL convergence and performance is still not addressed.

\subsection{RIS transmission control}
RIS is a meta-surface with integrated electronic circuits that can alter an incoming electromagnetic field in constructing or deconstructing manners \cite{9410457Chongwen}.
The authors in \cite{9110869Chongwen} joint optimized the transmission beamforming of the base station and the RIS through the DDPG method.
An Unmanned Aerial Vehicle (UAV) equipped with the RIS has been investigated in \cite{9416239Sixian} that functions as a passive relay to minimize the personal data rate received by the eavesdropper.
Although previous literature thoroughly researched the RIS's signal construction/deconstruction capability, it remains in the physical layer to explore the RIS functions \cite{{9899454Xu},{9500188Xu}}.
Typically, the link deconstruction function of RIS retrospectives is only devoted to secure communications \cite{{9446916Jingyi}, {10175566Shengbin}}.
However, the multiple RIS constructions and deconstructions will inevitably influence the topology and routing behaviors in the network layer.
Nevertheless, the multiple RIS constructions and deconstructions will inevitably influence the topology and routing behaviors in the network layer.
There is a research gap in thinking at the network layer to utilize the RIS, remolding the communication network. \\

Overall, how to devise a suitable network for distributed FL upon resource-limited flying cars is a blank in retrospect. 
This paper aims to develop a RIS-empowered topology control framework to improve the onboard learning ability of flying cars. The RIS-based communication topology control yields a ubiquitous DL model sharing with distributed fashion. 
This paper first leverages the RIS construction and deconstruction abilities to remold the communication network. 
It is also appropriate for the UAV swarm and multi-agent systems.

\section{RIS-empowered Distributed FL framework}
The core goal of this paper is to design a novel RIS-empowered communication topology control for efficient distributed FL in UAM, taking the convergence of DL model propagation into account.
The proposed RIS-empowered distributed FL scheme consists of two phases, i.e., topology optimization and distributed FL implementation.
In the topology optimization phase, we propose several topology criteria to generate an optimal communication topology for fast convergence of distributed DL models and relieve transmission delay requirements.
This paper utilizes the RIS link construction and deconstruction abilities to tailor the communication network.
In the distributed FL implementation phase, flying cars employ the RIS-optimized topology to propagate their shared DL models for DL model aggregation.
After several DL model sharings, the distributed FL yields the final converged outcomes.

Instead of a centralized aggregation by the integrator, the proposed distributed FL adopts a distributed aggregation protocol that weight-averages the received DL model parameters from neighbors in a distributed manner.
Hence, the vehicles only participate in cooperative training with neighbors rather than the central integrator.
It indicates that the availability of neighbors drastically impacts the vehicles' perception performance, which stems from the communication topology of the system.

\begin{figure}[h]
\centering
     \includegraphics[width=.44\textwidth]{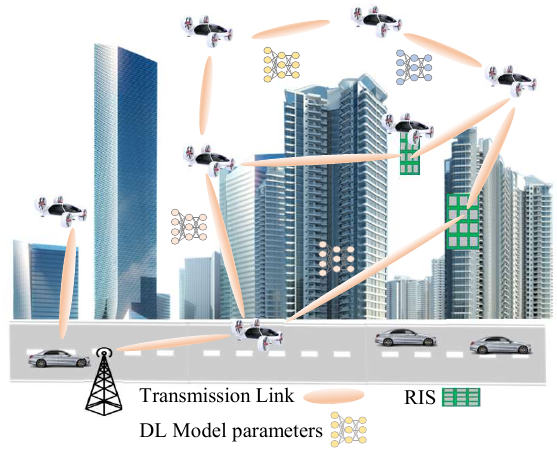} 
     \caption{RIS-empowered distributed FL in UAM scenario. } 
\label{paper8scenario}
\end{figure}

Thus, this paper concentrates on the distributed FL design and the corresponding topology optimization.
Fig.~\ref{paper8scenario} illustrates the RIS-empowered distributed FL implementation in the UAM scenario.
We consider an Internet-of-Vehicle of several spatially dispersed flying cars equipped with one single antenna.
Flying cars transmit their information or DL model parameters with communication neighbors through RIS or direct channels. 
The RIS can be mounted on building surfaces or flying cars. It could efficiently manipulate the propagation environment with multiple reflected elements. 

\subsection{Topology Design for Distributed FL}
We first introduce the distributed federated learning scheme. 
The distributed FL highly relies on the neighbor availability.
Suppose that each flying car communicates with its neighbors $N_i = \{j \in V| \{i,j\} \in  E \}$ on the communication network $G=(V,E)$, wherein $V=\{1,2,\dots m\}$ is the set of flying cars. 
Denote by $E\in\{V,V\}$ the set of communication links.
Then, we propose a distributed aggregation protocol between vehicle $i$ and its neighbor $j$ for the distributed FL carried out on the network $G$, which is given as,
\begin{equation}
\begin{aligned}
& v_i (k+1, t) =\\ &\sum_{v_j \in N_i}  a_{ij}   \left[v_j (k, t-\tau_{ij} )-v_i (k, t-\tau_{ij} )\right] +v_i (k, t),
\end{aligned}
\label{fjfjfjfjkkleie}
\end{equation} 
\noindent where $t$ represents the time variable and $k$ is the iteration epochs. 
$\tau_{ij}$ is the DL model transmission delay from the car $j$ to car $i$. 
$a_{ij}$ is the weight of the edge $(j,i)$. 
To simplify the analysis, we regard $a_{ij}=1$ as the direct connection between $i$ and $j$. 
Otherwise, $a_{ij}=0$ indicates no direct link between $i$ and $j$. 
Based on the previous work \cite{Saber1333204}, a necessary and sufficient condition for convergence of the distributed aggregation is, 
\begin{equation}
\begin{aligned}
\tau_{max} \le \frac{\pi}{2 \lambda_{max}},
\end{aligned}
\label{sdajhgfduireawuihfcvhj}
\end{equation} 

\noindent where $\tau_{max}$ is the maximum tolerable transmission delay of all communication links $E$. 
And $\lambda_{max}$ is the largest eigenvalue of the Laplacian matrix $L(G)$ of network $G$. 
The Laplacian matrix is expressed as $L(G)=D(G)-A(G)$ where $D(G)$ is a diagonal degree matrix of network $G$ with the $i$-th diagonal element $d_i=|N_i|$, and the non-diagonal elements of $D(G)$ are zero. 
And, $A(G)$ is an adjacency matrix composed by $a_{ij}$. 
Thus, the element of $L$ is
\begin{equation}
\begin{split}
\begin{aligned}
l_{i j}= \begin{cases}-1, & j \in N_i \\ \left|N_{i}\right|, & j=i\end{cases}.
\end{aligned}
\end{split}
\label{ujjjjrycsdasd}
\end{equation}

\noindent When the delay constraint (\ref{sdajhgfduireawuihfcvhj}) holds, all shared DL model parameters will asymptotically converge to the average value $\bar{v}=\frac{1}{| N_i |} \sum_{i} v_{i}$, in which $v_{i}$ is the DL model parameters of car $i$.
For networks with the aggregation protocol (\ref{fjfjfjfjkkleie}), $\bar{v} =\frac{1}{| N_i |} \sum_{i} v_{i}$ is the corresponding converged value.
Mentioned that the convergence property is available while the car platoon is fully connected, i.e., $\lambda_2>0$, where $\lambda_{1} \leq \lambda_{2} \leq \cdots \lambda_{max}$ express the eigenvalues of $L$ in ascending order.

Moreover, the eigenvalue of the Laplacian matrix for the flying car network fully determines the distributed FL convergence.
We entitle $\epsilon = v(t) - \bar{v}(t)$ as the deviation of $v$ and the converged value $\bar{v}$.
$\lambda_2$ confines the convergence rate of the shared DL model parameters in the distributed FL. There is \cite{Saber1333204},
\begin{equation}
\begin{split}
\begin{aligned}
||\epsilon(t)|| \le ||\epsilon(0)|| e^{- \lambda_2 (L) t},
\end{aligned}
\end{split}
\label{gfiuofuyuuhyhk}
\end{equation}

\noindent where the convergence deviation $||\epsilon(t)||$ of the shared model parameters exponentially diminishes with time $t$, specified by $\lambda_2$. 
The larger $\lambda_2$ incurs the faster parameter convergence speed in FL.
But by definition $\lambda_2 \le \lambda_{max}$, increasing $\lambda_2$ also results in a growth of $\lambda_{max}$.

Unfortunately, according to Eq.~(\ref{sdajhgfduireawuihfcvhj}), the rise of $\lambda_{max}$ will make the tolerable delay $\tau_{max}$ become rigorous. 
That is, it puts forward stringent requirements for communications. 
The main purpose of distributed FL network design is to increase $\lambda_2$ while trying to prevent the increase in $\lambda_{max}$. 
Therefore, the topology optimization problem is formed as,

\begin{equation}
\begin{split}
\begin{aligned}
\textbf{P1:} \qquad &\min\limits_{\{L(G), \tau_{max}\}}  \lambda_{max}(L) - \eta \lambda_2(L) \\
\text { s.t. } \ \ \ \ \
&\text{(5a):} \ \tau_{max} \le \frac{\pi}{2 \lambda_{max}},
\end{aligned}
\end{split}
\label{erwqeiuidteryoa}
\end{equation}

\noindent where constraint (5a) is the transmission delay required to guarantee the distributed FL convergence.
Denote by $\eta$ the weight of $\lambda_2$.
$\tau_{max}=\max \lceil \frac{ M_{ij} }{ R_{ij} } \rceil $ represents the tolerable transmission delay of all links where $M_{ij} $ is the shared DL model size and $R_{ij}$ is the transmission rate of link $(i,j)$. 

Besides, the communication topology structure specifies $\lambda_{max}(L)$ and $\lambda_2(L)$ of the Laplacian matrix $L(G)$. 
Search the optimal communication topology $G$ through a traversal method to minimize the $\lambda_{max}(L) - \eta \lambda_2(L)$ is impractical and expensive.
A viable solution is to split the optimization goal into the minimizing $\lambda_{max}$ part and the maximizing $\lambda_2$ part.
Hereafter, we explore the topology controls for $\lambda_{max}$ and $ \lambda_2$, respectively.

\subsubsection{\textbf{Topology control for minimizing $\lambda_{max}$}}
based on the Gersgorin disk theorem \cite{horn_johnson_2012}, all the eigenvalues $\bm{\lambda}(L)$ of $L$ are located in the disk,
\begin{equation}
\begin{split}
\begin{aligned}
D(G) = \{ z \in \bm{\lambda}(L) \ | \ \|z-d_{max}(G)\| \le d_{max}(G) \}.
\end{aligned}
\end{split}
\label{stututuiweqfdfsguydfuoyfga}
\end{equation}

\noindent It implies the largest eigenvalue $\lambda_{max}\le 2d_{max}(G)$ where $d_{max}(G)$ is the largest vertex degree of network $G$.
Thus, reducing $d_{max}(G)$ can shrink the largest eigenvalue $\lambda_{max}$ of $L(G)$.
Moreover, the work in \cite{Nica2016ABI} proved that $\lambda_{max} = 2d_{max}(G)$ if and only if the graph is regular and bipartite.
A graph is regarded as a bipartite if its vertex set can be partitioned into two subsets $V_0$ and $V_1$, where no two points belonging to the same subset are connected by an edge, as shown in Fig~\ref{bipartite}.

\begin{figure}[h]
\centering
     \includegraphics[width=.28\textwidth]{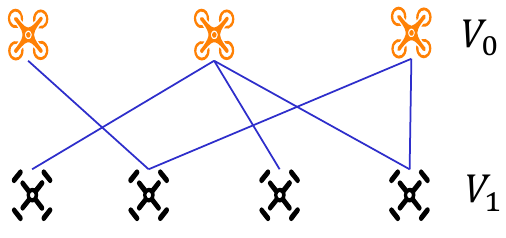} 
     \caption{Demonstration of the bipartite.} 
\label{bipartite}
\end{figure}

According to the bipartite graph theorem in the work \cite{Nica2016ABI}, a graph is bipartite if and only if it holds no cycles with an odd length. 
Alternatively, any path with an odd length in the bipartite has its endpoints on the different sides of the bi-partition. 
Consequently, we can construct an odd cycle to violate the bipartite form, which avoids $\lambda_{max}$ attaining the maximum.
Meanwhile, $d_{max}(G)$ of the network should be suppressed to reduce the upper bound of $\lambda_{max}$.

\subsubsection{\textbf{Topology control for maximizing $\lambda_{2}$}}
on account of Cheeger's inequality \cite{Nica2016ABI}, the relation of $\lambda_2$ to conductance $\Phi(G)$ is given as,
\begin{equation}
\begin{split}
\begin{aligned}
\frac{1}{2} \lambda_2 \le \Phi(G) \le \sqrt{2\lambda_2}.
\end{aligned}
\end{split}
\label{qbbbbbbbbbbg}
\end{equation}

\noindent Based on Eq.~(\ref{qbbbbbbbbbbg}), $\lambda_2$ increases with $\Phi(G)$ that is the conductance of network $G$ defined as,
\begin{equation}
\begin{split}
\begin{aligned}
\Phi(G) = \frac{\|\partial V_i\|}{\min\{d(V_i), d(V_i^c)\}}.
\end{aligned}
\end{split}
\label{tehpiorethierfgj}
\end{equation}
 
\noindent For a graph's non-empty vertex subset $V_i$, the boundary $\partial V_i$ is the set of links that connects $V_i$ to its complement $V_i^c$.
$\|\partial V_i\|$ indicates the element number of set $\partial V_i$.
Noted that $\|\partial V_i\|$ equals $\|\partial V_i^c\|$ since $V_i$ and its complement $V_i^c$ have the same boundary.
$d(V_i)$ and $d(V_i^c)$ refer to the sum of all vertex degrees of $V_i$ and $V_i^c$, respectively.
The intuitive purpose of the definition $\Phi(G)$ is to connect the most distant nodes as much as possible to increase the numbers of boundary $\partial V_i$.

\begin{figure}[h]
\centering
     \includegraphics[width=.28\textwidth]{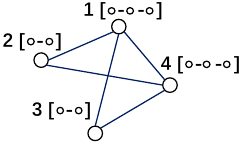} 
     \caption{Vertex degree structure.} 
\label{example_degree_structure}
\end{figure}
Instead of just counting the vertex degree, we explore the neighbor structure of a vertex to analyze the topological factor impacting the conductance $\Phi(G)$.
The neighbor structure enriches the regional information offered by vertex degrees. Given a graph, the neighbor structure has all vertices and edges inherited from the ambient subgraph of the investigated node.

As an illustration of this idea, the graph of Fig.~\ref{example_degree_structure} has two types of neighbor structures, $\Circle - \Circle$ and $\Circle - \Circle - \Circle$.
We named the form $\Circle - \Circle \dots$  as the connected neighbor structure and $\Circle \ \ \Circle \dots$ as the singleton neighbor structure. 
Hereafter, we turn to reveal a topological criterion for maximizing $\lambda_2$ based on the concept of the singleton neighbor structure.
\begin{theorem} 
\label{1111xxx}
Given graph $G$, the conductance $\Phi(G)$ attains maximum as all vertices exhibit the singleton neighbor structure, i.e., $\Circle \ \ \Circle \dots$.
\end{theorem}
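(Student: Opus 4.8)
The plan is to work directly from the conductance definition in Eq.~(\ref{tehpiorethierfgj}) and recast the maximization of $\Phi(G)$ as a statement about how a fixed edge budget is distributed relative to cuts. First I would rewrite the boundary size through an internal-edge count. Writing $e(V_i)$ for the number of edges with both endpoints inside $V_i$, a degree-counting identity gives $d(V_i)=2e(V_i)+\|\partial V_i\|$, since every internal edge contributes twice to the degree sum of $V_i$ while every boundary edge contributes once. Assuming without loss of generality that $d(V_i)\le d(V_i^c)$, substitution into Eq.~(\ref{tehpiorethierfgj}) yields
\[
\Phi(G)=1-\frac{2\,e(V_i)}{d(V_i)}.
\]
Hence, with the vertex set and degree sequence (and therefore the total number of edges) held fixed, maximizing $\Phi(G)$ is equivalent to making the worst-case low-volume set as sparse as possible, i.e.\ minimizing the internal-edge density $e(V_i)/d(V_i)$ over the critical cut.

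Next I would tie this internal density to the neighbor structure. A connected neighbor structure at a vertex $i$ means two of its neighbors $j,k\in N_i$ are mutually adjacent, so $\{i,j,k\}$ already carries three internal edges --- the densest possible three-vertex configuration and a closed triangle whose edge $\{j,k\}$ is trapped locally. Such a trapped edge inflates $e(V_i)$ for the small set surrounding $i$ without ever contributing to a distant boundary $\|\partial V_i\|$, which is exactly the high-density, low-conductance regime isolated above. Converting every vertex to the singleton neighbor structure forbids adjacencies among a vertex's neighbors (the graph becomes triangle-free), so the same edge budget is instead spent on links reaching vertices outside the local cluster; these links straddle cuts and enlarge $\|\partial V_i\|$, matching the stated design intuition of connecting the most distant nodes.

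To turn this into an optimization argument I would use a degree-preserving rewiring: starting from any $G$ possessing a connected neighbor structure, detach one endpoint of a neighbor--neighbor edge and reattach it to a vertex lying outside the current dense cluster. This keeps all degrees (hence $d(V_i)$ and $d(V_i^c)$) unchanged while lowering the internal-edge count of the offending set, so by the displayed identity it cannot decrease $\Phi(G)$. Iterating the rewiring removes triangles one by one and must terminate, and the terminal graph has a singleton neighbor structure at every vertex; therefore the maximum of $\Phi(G)$ is attained precisely in that configuration.

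The hard part will be the quantification over \emph{all} cuts at once: the conductance-minimizing set $V_i$ need not coincide with any single vertex neighborhood, so I must show that eliminating triangles lowers the internal density of the \emph{globally} worst cut, not merely of local neighborhoods, and that a rewiring improving the current critical cut does not push some other cut below the running minimum. I expect to control this by fixing the degree sequence and bounding $\max_{V_i} e(V_i)/d(V_i)$ through the triangle-free constraint, so that the singleton structure caps the attainable local density uniformly; making that uniform bound rigorous is the crux that converts the boundary-maximization heuristic into a proof.
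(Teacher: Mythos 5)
Your identity $d(V_i)=2e(V_i)+\|\partial V_i\|$ and the rewriting $\Phi=1-2e(V_i)/d(V_i)$ on the low-volume side are correct, and your route (degree-preserving rewiring plus an internal-density argument) is genuinely different from the paper's, which argues by contradiction on a single cut: the paper encloses the offending vertex in $V_i$, assigns its neighbors to $V_i^c$, and then \emph{deletes} the neighbor--neighbor edges, so that $\|\partial V_i\|$ is unchanged while $d(V_i^c)$ strictly drops, pushing that cut's ratio up and contradicting maximality. However, your proposal has two genuine gaps. The first is the one you flag yourself, and your sketched plan for closing it cannot work: a rewired edge $\{j,l\}$ is internal to every vertex set containing both $j$ and $l$, so by your own identity all such cuts get strictly worse, and triangle-freeness does not supply the ``uniform cap'' on $\max_{V_i} e(V_i)/d(V_i)$ that you hope for --- two long cycles joined by a single edge form a triangle-free graph whose bottleneck cut has internal density arbitrarily close to $1$ (conductance arbitrarily close to $0$). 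Eliminating connected neighbor structures therefore does not by itself bound the worst cut; the theorem asserts a \emph{necessary} condition on the maximizer, not a sufficient one, and your closing strategy conflates the two directions.

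The second gap you do not flag: the termination claim fails inside your comparison class. A single rewiring can create new triangles (reattaching $j$ to $l$ when $j$ and $l$ share a neighbor), so triangles are not ``removed one by one''; worse, because your moves preserve the degree sequence, the claimed terminal graph may not exist at all. By Mantel's theorem a triangle-free graph on $m$ vertices has at most $m^2/4$ edges, so dense degree sequences --- already that of $K_4$, with $6>4$ edges --- admit no triangle-free realization, and no sequence of degree-preserving rewirings can reach a graph in which every vertex has the singleton neighbor structure. The paper's edge-deletion move sidesteps this existence obstruction precisely because it does not preserve degrees. If you want to keep the rewiring framework, you must either restrict the admissible degree sequences or allow deletions as well, and in either case you still have to resolve the global monotonicity of the minimum over all cuts --- a point on which, it should be said, the paper's own one-cut contradiction is also silent.
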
 

\begin{proof}
we assume that a vertex in graph $G$ with the maximum $\Phi(G)$ has the connected neighbor structure $\Circle - \Circle \dots$.
According to Eq.~(\ref{tehpiorethierfgj}), the conductance rises with $\|\partial V_i\|$ and reduces with $d(V_i)$ and $d(V_i^c)$, wherein $\|\partial V_i\|$ is the number of links connecting the sub-graph $V_i$ to its complement.
$d(V_i)$ is the sum of the vertex degree in the subset $V_i$.

Supposing a node $i$ with its connected neighbor structure is enclosed in subset $V_i$. Then, we assign all neighbors of the node into the complement set $V_i^c$. 
This operation increases the link number of the boundary $\|\partial V_i\|$ while producing the complement $V_i^c$ with redundant intra-links. 
This is because the connected neighbor structure of node $i$ proves at least one link between the neighbors of node $i$ in $V_i^c$.

Once we cut off the intra-links, the sum of the vertex degree $d(V_i^c)$ in $V_i^c$ will decrease.
As a result, based on Eq.~(\ref{tehpiorethierfgj}), the yielded conductance $\Phi'$ by cutting links is larger than the initial $\Phi$ with a connected neighbor structure.
It conflicts with the assumption that the maximum $\Phi$ holds a connected neighbor structure. 
Therefore, the neighbor structure of all nodes in graph $G$ with the maximum conductance $\Phi(G)$ must be the singleton, i.e., $\Circle \ \ \Circle \dots$.
\end{proof}

In summary, we propose topological criteria to optimize distributed FL implementation.
\begin{lemma} 
\label{222yyyy}
The topological criteria for distributed FL are given as,
\begin{enumerate}[label=\alph*).]
\item The maximum vertex degree $d_{max}$ should be lessened as much as possible.
\item The topology should have a cycle of odd length.
\item Link the longest distant node pair.
\item All vertices preserve the singleton neighbor structure. 
\end{enumerate}

\noindent Following the above topological criteria to generate a network can minimize the objective of $P1$, i.e., $\left[\lambda_{max}(L) - \eta \lambda_2(L)\right]$. 
\end{lemma}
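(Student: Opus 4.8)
The plan is to prove the lemma by exploiting the split of $P1$'s objective $\lambda_{max}(L)-\eta\lambda_2(L)$ that the preceding discussion has already motivated: since a joint traversal over all topologies is intractable, I would argue that each of the four criteria independently drives one of the two additive terms in the desired direction, so that a graph satisfying all four simultaneously reduces the composite objective. Concretely, criteria (a) and (b) target the minimization of $\lambda_{max}$, while criteria (c) and (d) target the maximization of $\lambda_2$ (equivalently, the minimization of $-\eta\lambda_2$).

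For the $\lambda_{max}$ term, I would first invoke the Gersgorin bound established in Eq.~(\ref{stututuiweqfdfsguydfuoyfga}), namely $\lambda_{max}\le 2d_{max}(G)$, to justify criterion (a): shrinking the largest vertex degree $d_{max}$ lowers the only upper bound controlling $\lambda_{max}$. Next I would use the characterization from \cite{Nica2016ABI} that $\lambda_{max}=2d_{max}(G)$ holds if and only if $G$ is regular and bipartite, together with the bipartite-cycle theorem that a graph is bipartite iff it contains no cycle of odd length. Introducing an odd cycle therefore forces $\lambda_{max}$ strictly below the bound $2d_{max}$, which is precisely criterion (b). Hence (a) and (b) jointly suppress the first term.

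For the $\lambda_2$ term, I would reduce the maximization of $\lambda_2$ to the maximization of the conductance $\Phi(G)$ through Cheeger's inequality (\ref{qbbbbbbbbbbg}), where the two-sided relation $\tfrac{1}{2}\lambda_2\le\Phi(G)\le\sqrt{2\lambda_2}$ guarantees that a larger $\Phi(G)$ admits a larger $\lambda_2$. From the conductance definition (\ref{tehpiorethierfgj}), $\Phi(G)$ grows with the boundary size $\|\partial V_i\|$ and shrinks with $d(V_i)$ and $d(V_i^c)$. Criterion (c) --- linking the most distant node pair --- increases $\|\partial V_i\|$ by adding cut edges across separated components, while criterion (d) is exactly the conclusion of Theorem~\ref{1111xxx}, which already shows that the singleton neighbor structure maximizes $\Phi(G)$ by eliminating the redundant intra-links that inflate $d(V_i^c)$. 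Thus (c) and (d) jointly maximize $\lambda_2$.

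The main obstacle is reconciling the two halves, since the text itself notes that $\lambda_2\le\lambda_{max}$ and that raising $\lambda_2$ tends to raise $\lambda_{max}$; the criteria can therefore pull against one another, and a fully rigorous statement would require controlling this coupling rather than treating the two terms as independent. I would address this by arguing that the degree-suppressing character shared by criteria (a) and (d) keeps the two objectives aligned --- reducing $d_{max}$ and pruning intra-links both lower degrees without destroying the cut structure that carries $\lambda_2$ --- and by invoking the weight $\eta$ to formalize the trade-off, so that in the operating regime of interest the combined criteria yield a strictly smaller value of $\lambda_{max}(L)-\eta\lambda_2(L)$ than any graph violating one of them. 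I expect this final step to remain a per-term justification rather than a global-optimality proof, which is consistent with the heuristic construction the paper is proposing.
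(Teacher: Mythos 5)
Your proposal is correct and follows essentially the same route as the paper: splitting the objective into its two terms, justifying criteria (a) and (b) via the Gersgorin bound $\lambda_{max}\le 2d_{max}(G)$ and the regular-bipartite/odd-cycle characterization, and justifying criteria (c) and (d) via Cheeger's inequality, the conductance definition, and Theorem~\ref{1111xxx}. Your explicit acknowledgment of the $\lambda_2\le\lambda_{max}$ coupling and the per-term (rather than globally optimal) nature of the argument matches the paper's own heuristic treatment.
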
 



Our proposed topology control scheme is devoted to ensure shared parameter convergence and accelerate convergent speed. 
Without the topology control scheme, the shared parameters cannot attain the final converged value upon specific transmission constraints.
As we specify the communication network for distributed FL, the acceptable maximum communication delay $\tau_{max}$ between vehicles is bounded, i.e., $\tau_{max} \le \frac{\pi}{2 \lambda_{max}}$.
This tolerable transmission delay determines the point-to-point communication rate.
Subsequently, we leverage the RIS link modification ability to remold the communication network based on the topological criteria and accommodate the transmission rate of each communication link.




\subsection{RIS-empowered Topology Control}
This section proposes a RIS-based topology control for efficient distributed FL in UAM.
Each reflective element of RIS can program the phase of incident signals.
By appropriately shifting each element's phase, the incident signals are constructively added at the points of interest and deconstructively reduced at the point of aversive.
In contrast to retrospectives, whose RIS primary function concentrates on physical transmission power or attaining secure communication, we originally operate the RIS functions in the network layer to tailor the network topology for fast convergence of distributed FL.

Denote by $\mathbf{\mathbf{\phi}} = diag\{e^{j\phi_{1}}, e^{j\phi_{2}}, \dots e^{j\phi_{M}}\}$ the phase-shift matrix of the RIS, where $e^{j\phi_m}\in [0, 2\pi), \forall m \in \mathcal{M} $ represents the phase-shift of the $m$th RIS reflective element. 
$\mathcal{M} \triangleq \{1,2,\dots M\}$ is the set of indices between $1$ and $M$. 
And the channel state information (CSI) of communicating vehicles is assumed to be perfectly known on the RIS side.
The RIS controller can manipulate the phase-shift matrix $\mathbf{\phi}$ according to the collected CSI.

This paper exploits the Time Division Multiple Access (TDMA) scheme in flying car communication. 
The total bandwidth allocated to the platoon is divided into $U$ resource blocks (RBs), each of $W_b$ MB bandwidth. 
And the service periods of each vehicle are equal.
The per service period is further divided into a set of identical mini-time slots, denoted by $T_s \triangleq \{t_s, 2t_s,..., Kt_s\}$, where the duration of each mini-time slot is marked by $t_s$. 
$K$ is the number of the mini-time slot contained in a service period.

In this context, $\mathbf{h}_{\mathrm{car}, \mathrm{RIS}} \in \mathbb{C}^{N \times 1}$ represents the channel coefficients between a car and the RIS elements. 
Moreover, for all $m \in M$, let $\mathbf{h}_{i,m} \in \mathbb{C}$ and $\mathbf{h}_{\mathrm{RIS},m} \in \mathbb{C}^{N \times 1}$ be the channel coefficients of direct links from car $i$ and RIS to the $m$th car, respectively. 
Additionally, each communication link in the network is assumed to have a quasi-static flat-fading Rayleigh channel.
${s_i}$ is the transmitted data symbol for the $i$th car. 
Taking both the direct link and RIS cascaded links into account, we can give the received signal at the $m$th car as follows,
\begin{equation}
    {y_{im}} = \sqrt p ({{\mathbf{h}}_{i,m}} + {{\mathbf{h}}_{{\text{RIS}},m}}^H\phi {{\mathbf{h}}_{{i},{\text{RIS}}}}){s_i} + {z_m}, \  \forall m \in M,
    \label{wsdfheuirfhu3434289}
\end{equation}

\noindent where ${z_i}$ is white Gaussian noise and ${p}$ is the transmission power of flying cars. 
Due to the time division access, there is no interference at a receiver.
Consequently, the available transmission rate between $i$th car and $m$th car is,
\begin{equation}
    R_{im} = B{\log _2}\left(1 + \frac{{p\|{{\mathbf{h}}_{i,m}} + {{\mathbf{h}}_{{\text{RIS}},m}}^H\phi {{\mathbf{h}}_{{i},{\text{RIS}}}}\|^2}}{{{\sigma ^2}}}\right),
\label{shannel}
\end{equation}

\noindent in which, ${{\sigma ^2}}$ denotes the power of Gaussian white noise. 
The received signal experiences three channels: the direct channel between the transmitter and receiver, the channel between the transmitter and RIS, and the channel between RIS and the receiver. 
We can achieve the desired transmission rates by altering the RIS phase shift matrix.
Depending on the RIS mechanism, this paper explores signal construction and deconstruction to revise the communication topology for the distributed FL.

\subsubsection{Signal Construction} 
according to Eq.~(\ref{shannel}), we can increase the RIS channel component ${{{\mathbf{h}}_{{\text{RIS}},m}}^H\phi {{\mathbf{h}}_{{\text{car}},{\text{RIS}}}}}$ through adjusting $\phi$.
This operation can construct the signal power at the receiver, enhancing the transmission rate and bypassing building obstruction.

\subsubsection{Signal Deconstruction}
however, not all communication links need to enhance.
The link-adding procedure will consume more bandwidth resources and stringent the transmission delay requirements for distributed FL convergence since the link-adding operation increases $\lambda_{max}$.
However, the vehicular network may not fulfill the stringent delay requirement due to the limited bandwidth.
Hence, we must eliminate redundant links to lessen bandwidth resource consumption and relieve delay requirements.

According to Eq.~(\ref{shannel}), the transmission rate of the undesired link will approach $0$ as ${{\mathbf{h}}_{car,m}} + {{\mathbf{h}}_{{\text{RIS}},m}}^H\phi {{\mathbf{h}}_{{\text{car}},{\text{RIS}}}} = 0$.
Thus, the deconstructive RIS phase shift $\phi$ is,
\begin{equation}
\phi  =  - {({\mathbf{h}}_{{\text{RIS}},m}^H)^{ - 1}}{{\mathbf{h}}_{car,m}}{({{\mathbf{h}}_{{\text{car}},{\text{RIS}}}})^{ - 1}}
\label{dvoioewitubggogsiubh_v}
\end{equation}

The investigated flying car platoon is demonstrated in Fig.~\ref{ConveasDsaidhuywe}. 
The initial communication topology, as shown in Fig.~\ref{RIS_scenario1}, is molded in obstacles and path loss where car $6$ and $7$ are disconnected due to the building's obstructions. 
Car $3$ and $2$ cannot attach to distant vehicles due to the path loss.
Additionally, several redundant links in Fig.~\ref{RIS_scenario1} deteriorate the distributed FL performance while increasing bandwidth consumption and the largest eigenvalue $\lambda_{max}$, such as the link between car $3$ and $1$.
This link creates a connected neighbor structure of car $2$ that violates the criterion \textit{(d)} of Lemma \ref{222yyyy}.
Consequently, the link between car $3$ and $1$ should be deconstructive.
Meanwhile, according to the criterion \textit{(c)} of Lemma \ref{222yyyy}, we must construct the link (3,7) and link (2, 6).

\begin{figure}[h]
\centering
\subfloat[Initial topology.]{\label{RIS_scenario1}{\includegraphics[width=0.483\linewidth]{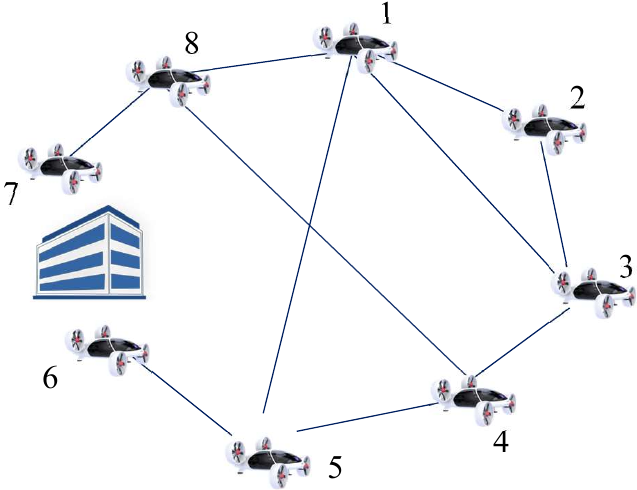}}} 
\hfill
\subfloat[Revised topology.]{\label{RIS_scenario2}{\includegraphics[width=0.47\linewidth]{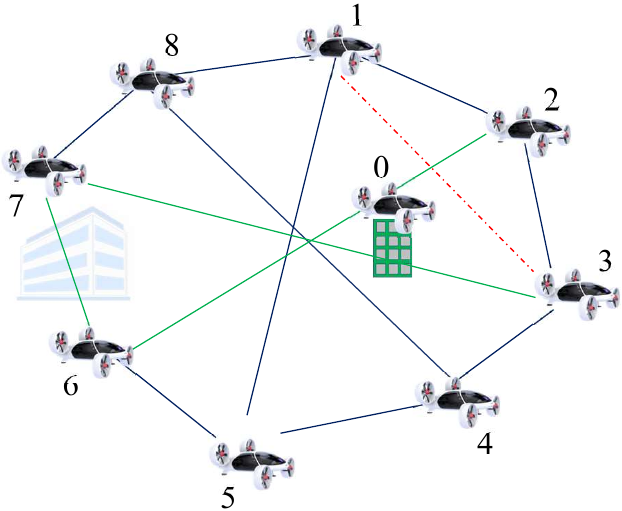}}}
\hfill
\caption{RIS-empowered topology reshaping. In the revised topology, the solid green and dashed red lines represent the newly constructive and deconstructive links, respectively.}
\label{ConveasDsaidhuywe}
\end{figure}

By leveraging RIS's link constructive and deconstructive abilities, we can control the RIS phase shift $\phi$ to reshape the initial topology, making it friendly to distributed FL.
According to the Lemma~\ref{222yyyy}, we revise the initial topology into Fig.~\ref{RIS_scenario2}, based on the topological criteria. 
The criteria-revised topology can boost distributed FL convergence and reduce bandwidth consumption while meeting the delay requirement.

In this paper, we regard the link transmission rate as determining the link connectivity.
Specifically, a link is disconnected as the transmission rate is less than $R_{lower}$.
Moreover, a link is available for distributed FL as the transmission rate over $R_{upper}$.
$\mathcal{N}_k$ is the set of vehicles intending to connect with car $k$ while ${\mathcal{N}'_k}$ is the set of vehicles towards disconnecting with car $k$ based on the topological criteria.
Therefore, the optimization solution of $P2$ specifies the transmission rates to tailor the communication topology.
\begin{equation}
\begin{split}
\begin{aligned} 
\textbf{P2:} \qquad &\max _{\mathbf{\phi}}  \sum_{k \in V(G)} \left(\sum_{i \in \mathcal{N}_k} R_{ki}-\sum_{j \in \mathcal{N}^{\prime}_k} R_{kj} \right) \\
\text { s.t. } \ \ \ \ \
&\text{(12a):}  \  \|\phi(i, j)\|=1 \\
&\text{(12b):}  \ R_{ki} \geq R_{\text {upper }}=\frac{2 \Lambda \lambda_{max}}{\pi}, i \in \mathcal{N}_k \\
&\text{(12c):}  \ R_{kj} \leq R_{\text {lower }}, j \in \mathcal{N}^{\prime}_k,
\end{aligned}
\end{split}
\label{111yweuvrto}
\end{equation}
\noindent where constraint (12a) ensures that the phase shift amplitude is equal to $1$. 
Constraint (12b) is the transmission rate required by the constructive links, in which $\Lambda$ is the network traffic volume, i.e., the volume of the shared DL model parameters.
This constraint implies that constructive link transmission delay must be equal or less than $\frac{\pi}{2\lambda_{max}}$.
Constraint (12c) is the transmission rate required by the deconstructive link.
We regard that the link is deconstructed well when the transmission rate is less than the $R_{lower}$ threshold. 

\subsection{DDPG-based RIS Optimization}
To solve $P2$, we first form the RIS-based link construction and deconstruction optimization as a Markov decision process, i.e., $S_\text{MDP} = (State, Action, Reward)$, then engages a deep deterministic policy gradient (DDPG) algorithm to explore actions. The MDP components are listed as,

\textbf{State} is comprised by the normalized transmission rate $S(t)$ of all links and the RIS phase shift matrix $\phi(t-1)$ at time slot $t-1$, written as $[S(t), \phi(t-1)]$.

\textbf{Action} is constituted of the RIS phase shift matrix $\phi\in [-\pi, \pi]$.
The element of the phase shift matrix is a complex number, which can be decomposed as the real and imaginary parts with the modulus equal to $1$.

\textbf{Reward} is configured as Eq.~(\ref{efwjlhgvwervyugasdohvouadf}) in each step.
The reward comprises the transmission bonus and the penalty for the delay violation.
For vehicle $i$ expected to construct the link with car $k$, i.e., $i\in \mathcal{N}_k$, we employ the transmission rate summation $\sum\limits_{i \in \mathcal{N}_k} {R_{ik}} $ as the transmission bonus.
When the transmission rate $R_{ik}$ is lower than the delay requirement of the distributed FL, there is a penalty of the link, i.e., $\sum\limits_{i \in \mathcal{N}_k} {[{\frac{2 \Lambda \lambda_{max}}{\pi}} -{R_{ik}}]^+}$, where $[x]^+$ represents $\max{(x,0)}$.
\begin{figure*}[h]
\centering
     \includegraphics[width=.88\textwidth]{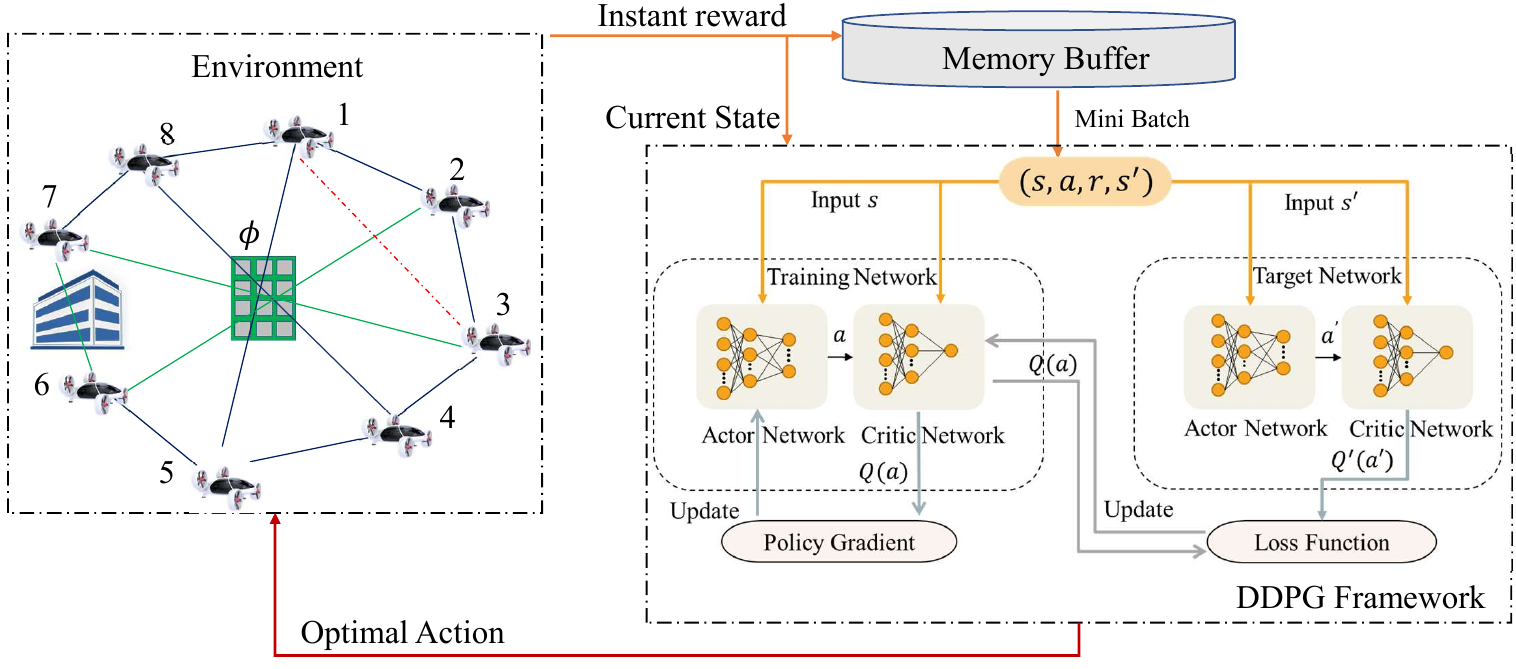} 
     \caption{DDPG-based RIS control framework.} 
\label{DDPG_Idea}
\end{figure*}

For vehicle $j$ expected to deconstruct the link with car $k$, i.e., $j \in \mathcal{N}'_k$, the transmission bonus is specified to $-\sum\limits_{j \in \mathcal{N}'_k} {R_{jk}}$.
In addition, the penalty of the deconstructive link is $\sum\limits_{j \in \mathcal{N}'_k}  [{R_{jk}} - {R_{lower}} ]^+$, indicating the deconstructed link's transmission rate should not exceed $R_{lower}$.
Therefore, we summarize the total reward of vehicle $k$ as,
\begin{equation}
\begin{split}
\begin{aligned} 
        r(t)_k &=  \sum\limits_{i \in \mathcal{N}_k} {R_{ik}} - \sum\limits_{j \in \mathcal{N}'_k} {{R_{jk}}}  - \\
        &  \gamma \left(\sum\limits_{i \in \mathcal{N}_k} {[{\frac{2 \Lambda \lambda_{max}}{\pi}} -{R_{ik}}]^+} +  \sum\limits_{j \in \mathcal{N}'_k} { [{R_{jk}} - {R_{lower}} ]^+} \right),  
\end{aligned}
\end{split}
\label{efwjlhgvwervyugasdohvouadf}
\end{equation}

\noindent where $\gamma$ is the weight of the penalty.
And the cumulative reward can be given at time $t$ as,
\begin{equation}
\begin{split}
\begin{aligned} 
        R(t)_k &= \sum\limits_{\tau=0}^{\infty} \beta(\tau)r_k(\tau+t-1), 
\end{aligned}
\end{split}
\label{fdgvirtgjrtgtr}
\end{equation}

\noindent in which $\beta$ is a discount factor for future rewards. The typical Q-function paired with the policy $\pi(s,a)$ is defined as $Q_\pi(s(t), a(t))$. Then, once the action $a$ is taken from state $s$, its Q-function is,
\begin{equation}
\begin{split}
\begin{aligned} 
      Q_\pi\left(s^{(t)}, a^{(t)}\right)=\mathcal{E}_\pi\left[R(t) \mid s(t)=s, a(t)=a \right].
\end{aligned}
\end{split}
\label{ujytkiofadvuwre5446}
\end{equation}

The proposed DDPG framework comprises actor and critic networks to substitute the state-action pair and value function in reinforcement learning, respectively.
The critic network generates Q value $Q(\theta^Q |s(t), a(t))$ to evaluate the current policy based on the instant rewards. 
Additionally, the actor-network updates the current policy $\pi(\theta^\mu|s, a)$ with hyper-parameters $\theta^\mu$. 
Wherein $\theta$ is updated by the gradient rule, $\theta(t+1) = \theta(t)-\zeta \nabla_{\theta} \mathcal{L}(\theta)$, in which $\zeta$ and $ \nabla_{\theta}$ are the learning rate and gradient of loss function $\mathcal{L}(\theta)$ respectively.

\begin{algorithm} 
	\caption{DDPG-based RIS Optimization} \label{Ag1}
	
    \textbf{Input}: normalized transmission rate of all links
    
    \textbf{Output}: optimal phase shift matrix $\phi^*$
    
	\textbf{Initialization}: given initialized experience replay memory; actor network ${\theta ^\mu }$; target actor network ${\theta ^{\mu'}}$, critic network ${\theta ^Q}$, target critic network ${\theta ^{Q'}}$, phase shift matrix $\phi$ \\
	\For{$eposide$ =0,1,2,...M-1} 
	{
		Observe initial state $state(0)$\\
		\For {$step$ = 0,1,2,...W-1} {
        Generate action $a(t+1) = \pi(\theta^\mu | s(t), a(t))$ from actor network

        Observe new state ${s(t + 1)}$ and reward $r(t + 1)$

        Save $({s(t)},{a(t)},{r(t + 1)},{s(t + 1)})$ in experience replay memory
   
    \If{experience replay memory is full} {
				Update actor and critic network
    
                 Update loss function according to Eq.~(\ref{c__fewugbeiuvbuebuasdccv})

                 Update the target network according to:
                 
                 $\begin{gathered}
  {\theta ^{Q'}} = \tau {\theta ^Q} + (1 - \tau ){\theta ^{Q'}} \hfill \\
  {\theta ^{\mu '}} = \tau {\theta ^\mu } + (1 - \tau ){\theta ^{\mu '}} \hfill \\ 
\end{gathered}$\\
			}
		}
	}				
\end{algorithm}

Due to the inability of the deep network to process imaginary values, this paper divides the input into real and imaginary parts, respectively, to run the deep network training regarding a real number way.
Therefore, as for the actor-network, the dimension of the input layer is $(2 M+n)\times 1$, in which $M$ depicts the number of RIS array elements. $2 M \times 1$ represents we encode the input as the real and image dimensions. 
$n$ is the number of transmitted signals. 
The actor output vector with the dimension of $2M\times 1$ represents the selected action at the current state. 
The structure of the critic network is the same as that of the action network, with only differences in the output layer, where the output vector is the fitted value of the Q-function with the $300\times 1$ dimension.

According to the framework of DDPG, there are two target networks, the target actor-network with parameters $\theta^{\mu'}$ and the target critic network with parameters $\theta^{Q'}$.
Moreover, the loss function $\mathcal{L}(\theta)$ is defined as the difference between $Q(\theta^Q |s(t), a(t))$ and $Q(\theta^{Q'} |s'(t), a'(t))$, which is expressed as,
\begin{equation}
\begin{split}
\begin{aligned} 
      \left.\ell(\theta)=\left(y-Q\left(\theta^{Q}\right) \mid s(t), a(t)\right)\right)^2.
\end{aligned}
\end{split}
\label{c__fewugbeiuvbuebuasdccv}
\end{equation}
\noindent where the target Q-value $y$ is,
\begin{equation}
\begin{split}
\begin{aligned}
     y=\left(r{(t+1)}+\alpha \max_{a'} Q\left(\theta^{Q'} \mid s', a'\right)\right) ,
\end{aligned}
\end{split}
\label{sdfguiyeruasdHIfghasuidhfewu_idew}
\end{equation}

\noindent where $\alpha$ denotes the weight coefficient. $Q(\theta^{Q'} |s'(t), a'(t))$ is the Q-value of target critic network.

The framework of the DDPG-based RIS control algorithm is illustrated in Fig.~\ref{DDPG_Idea}. 
According to the framework, the CSI of each car and the current RIS phase shift matrix are outputted to RIS.
DDPG derives the optimal action (the best RIS phase shift matrix) to optimize communication topology.

Alg.~\ref{Ag1} characterizes the proposed DDPG algorithm in detail, which can be well-trained offline.
Initially, a truncated normal distribution generates the actor and critic network parameters.
Alg.~\ref{Ag1} runs over $M$ episodes, each iterating $W$ maximum steps.
Each episode is terminated when the learning performance converges or reaches the maximum steps, i.e., $T$. 
The algorithm starts from the initial state $state(0) = [s(0), a(0)]$, in which $s(0)$ is the normalized transmission rate of all links at step $0$. 
The action $a(0)$ is configured as a matrix whose elements are complex numbers with a modulus equal to $1$.
The optimal action is yielded based on the current state and the instant reward.

 \section{Performance Evaluation}
This section conducted numerical simulations of the abovementioned scenarios and optimization algorithms. 
The RIS phase shift optimization scheme performs into each time slot since we employ the time division multiplexing on RIS communication. 
To evaluate the proposed topological criteria and the RIS-based distributed FL, we investigate the MobileNet-based multi-view learning using the proposed distributed FL framework, where MobileNet is a lightweight learning model designed to run on mobile devices proposed by Google \cite{HowardZCKWWAA17}.
Multi-view learning has become a hot topic during the past decades. 
It engages different views to bring diversity and redundancy to improve perception performance \cite{8998334Xiaodong}.

We investigate a scenario in which a car platoon is going to land on the street. The platoon must identify nearby ground vehicle categories to ensure landing safety and avoid collisions.
The involved multi-view dataset comprises the front, side, back, and vertical view pictures of various vehicles, a part of the dataset as shown in Fig~\ref{dashfgreiyaufgidsyugfiaydu}.

\begin{figure}[h]
\centering
     \includegraphics[width=.45\textwidth]{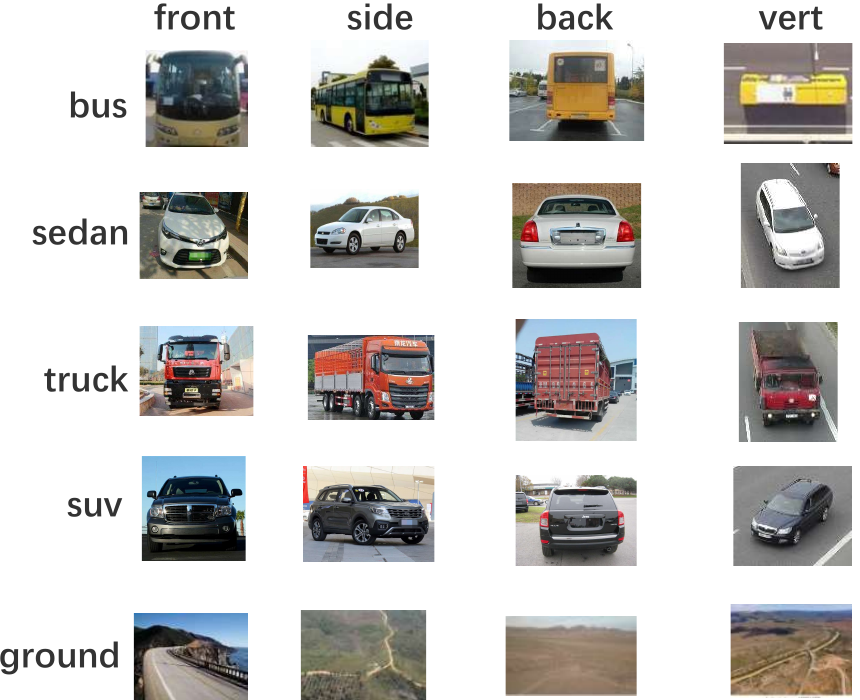} 
     \caption{Multi-view learning dataset. } 
\label{dashfgreiyaufgidsyugfiaydu}
\end{figure}

The dataset consists of $5000$ samples belonging to $4$ vehicle categories and $1$ background. 
Each category contains $4$ viewpoints, i.e., the front, side, back, and vertical. 
Furthermore, every viewpoint includes $250$ samples.
The flying car network before the RIS revising is molded by path loss and obstruction, as illustrated in Fig.~\ref{RIS_scenario1}. 
And the revised network is demonstrated in Fig.~\ref{RIS_scenario2}.
Cars will share their onboard DL model per 10 epochs of training.

To facilitate learning diversity, we divide the eight cars of Fig.~\ref{RIS_scenario2} into four parts. The car and node are equivalent in this paper.
Node $2$ and $6$ collect the side-view data. Node $3$ and node $7$ have front-view training samples. Node $4$ and node $8$ inspect the training samples from the vertical viewpoint. Node $5$ and node $1$ take account of the back-view samples to train their onboard models. 
Furthermore, the training dataset for each part is independently sampled from its corresponding view database, with a sampling probability of $70\%$. Thus, each part contains $875$ training samples. And the test set is the remaining $30\%$ of the view dataset.

\begin{figure}[h]
\centering
\subfloat[Convergence of distributed FL in the revised topology as shown in Fig.~\ref{RIS_scenario2}.  ]{\label{}{\includegraphics[width=0.99\linewidth]{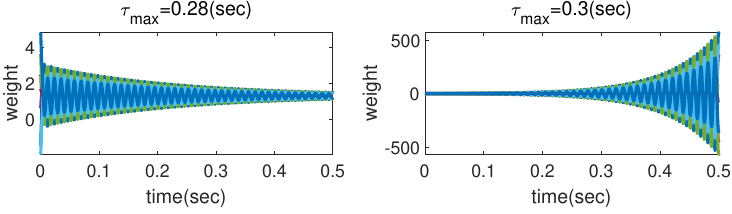}}} 
\hfill
\subfloat[Convergence of distributed FL in the initial topology as shown in Fig.~\ref{RIS_scenario1}. ]{\label{RT1_01}{\includegraphics[width=0.99\linewidth]{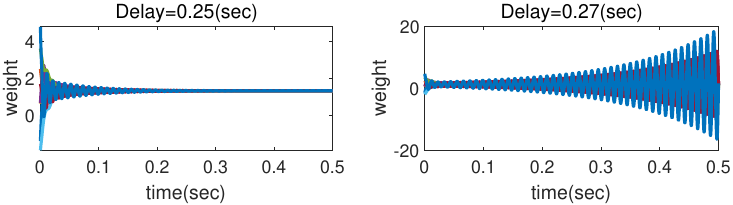}}}
\hfill 
\subfloat[Convergence of original FL in the star topology as shown in Fig.~\ref{TFL_scenario}. ]{\label{RT2_05}{\includegraphics[width=0.99\linewidth]{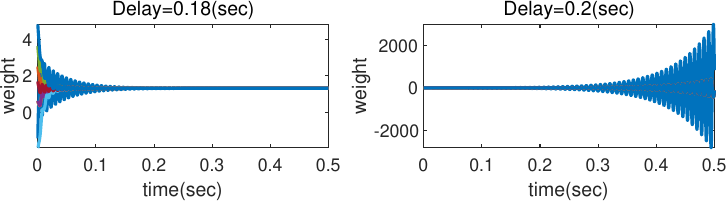}}}
\hfill
\caption{Convergence with different $\tau_{max}$ and topologies.}
\label{ConvergencgeasDUJ}
\end{figure}

Our proposed algorithm is run on a computer with NVIDIA GeForce RTX 2060 GPU and coded with PyTorch.
Some important simulation parameters are listed in Tab.~\ref{simulationargs},

 \begin{table}[!ht]
    \centering
    \caption{Simulation Parameters}
    \begin{tabular}{|l|l|l|}
    \hline
        \textbf{Description} & \textbf{Value} \\ \hline
       Line of sight loss coefficient  &  1.5 \\ \hline
       Non-line of sight loss coefficient & 4.0  \\ \hline
       Unit fading coefficient of path loss & 30 \\ \hline
       Bandwidth (Hz) & 3M \\ \hline
       Rice factor & 10 \\ \hline
       Vehicle transmission power(dbm) & 30  \\ \hline
       Number of RIS elements & 30-70  \\ \hline
       Discounted rate in DDPG &  0.9 \\ \hline
       Learning rate for training critic network in DDPG & 0.0001 \\ \hline
       Learning rate for training actor network in DDPG & 0.0001 \\ \hline
       Soft replacement in DDPG & 0.01 \\ \hline
       Buffer size for experience replay in DDPG & 10000 \\ \hline
       Batch size in DDPG & 32  \\ \hline
       Numbers of episodes in DDPG & 20000 \\ \hline
       Numbers of steps of each episode in DDPG & 100 \\ \hline
       Learning rate in MobileNet-based multi-view learning & 0.005 \\ \hline
       Batch size in MobileNet-based multi-view learning & 32  \\ \hline
        
    \end{tabular}
    \label{simulationargs}
\end{table}

\begin{figure}[h]
\centering
     \includegraphics[width=.28\textwidth]{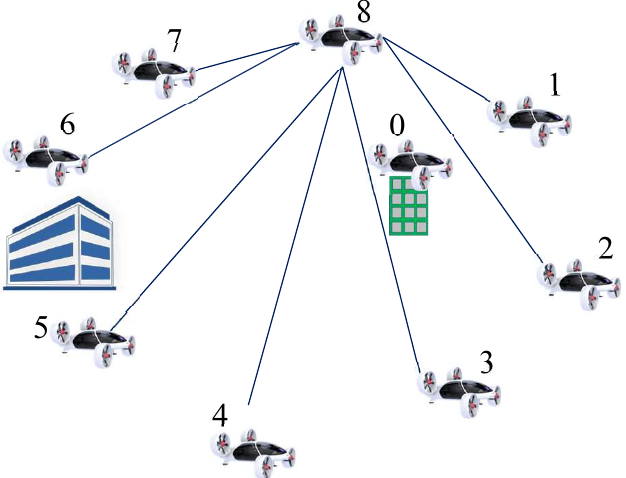} 
     \caption{Star topology for traditional FL integration.} 
\label{TFL_scenario}
\end{figure}

Fig.~\ref{ConvergencgeasDUJ} presents the convergence of distributed FL with different topologies.
The y-axis of Fig.~\ref{ConvergencgeasDUJ} is the average learnable weight parameters of a hidden node of the investigated neural network, i.e., MobileNet.
According to Eq.~(\ref{sdajhgfduireawuihfcvhj}), we can get the tolerable transmission delay for distributed FL convergence with revised topology, which equals $0.2901$.
Fig.~\ref{ConvergencgeasDUJ}a is the convergence of the distributed FL applied to multi-view learning with the topology Fig.~\ref{RIS_scenario2}.
The distributed FL converges with $\tau_{max} = 0.28s$.
However, the distributed FL is divergent with $\tau_{max} = 0.30s$ in the revised topology, i.e., Fig.~\ref{RIS_scenario2}.
\begin{figure*}
\centering
\subfloat{\label{}{\includegraphics[width=0.49\linewidth]{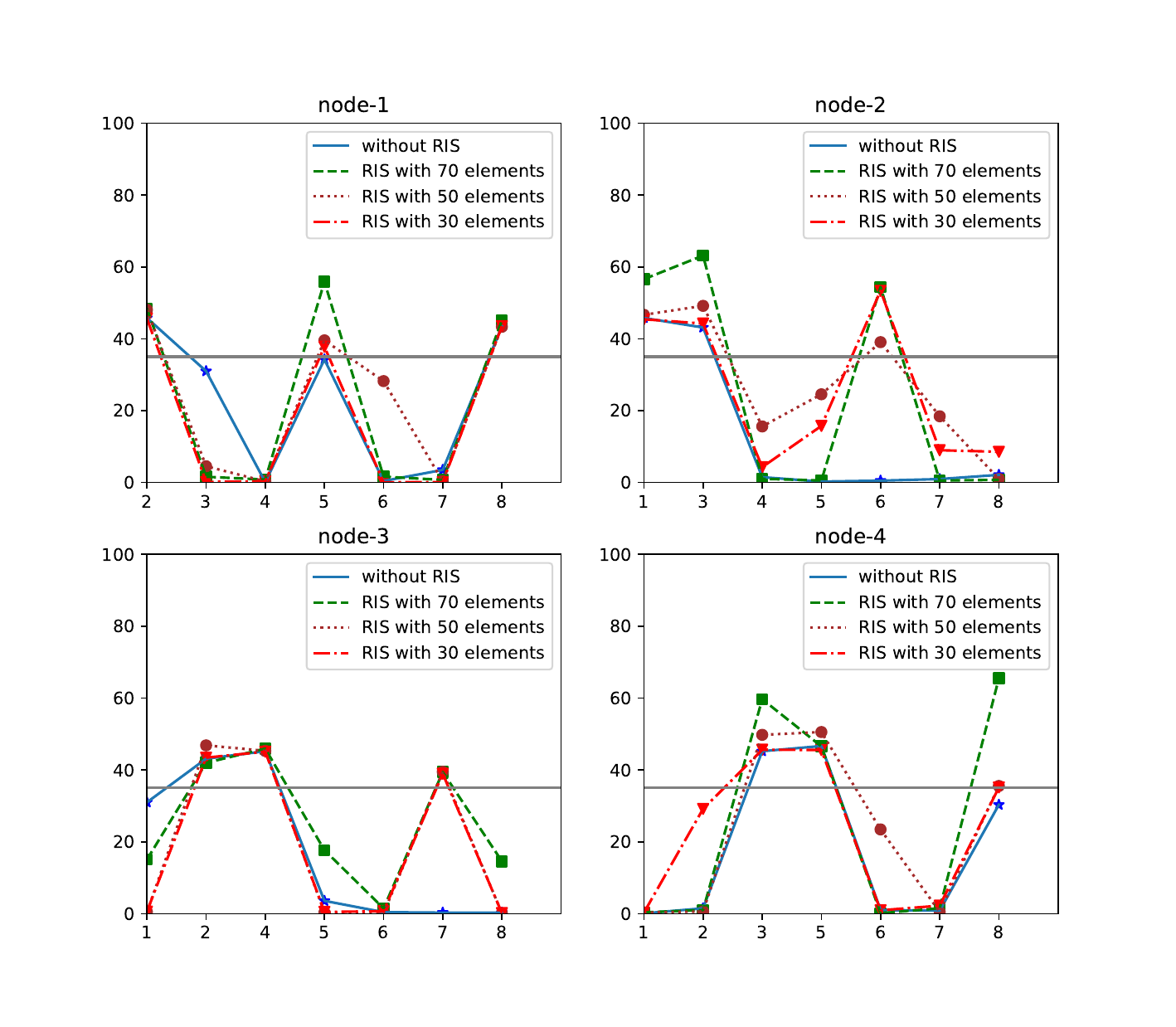}}} 
\subfloat{\label{RT1_01}{\includegraphics[width=0.49\linewidth]{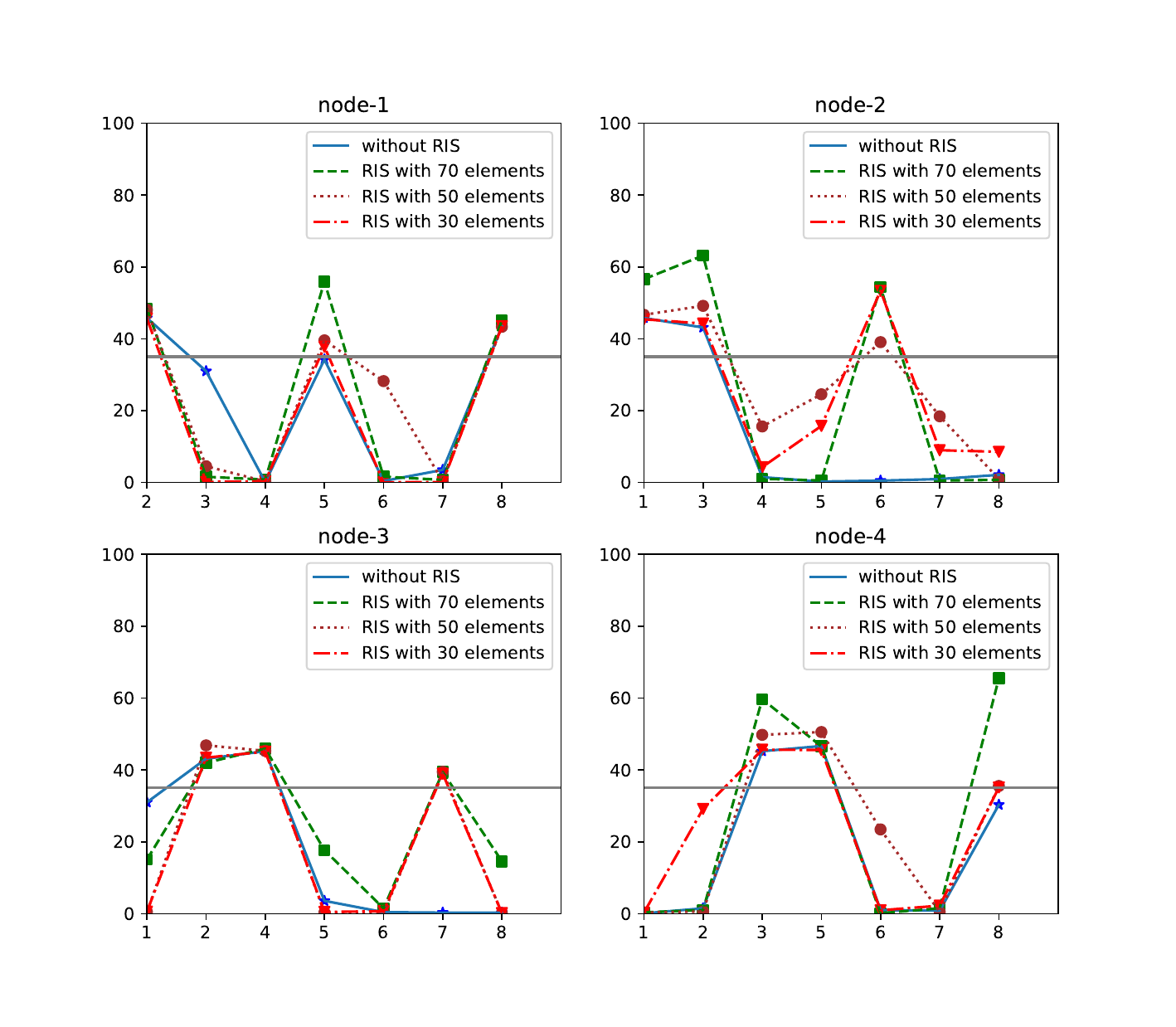}}}
\hfill 
\subfloat{\label{RT2_05}{\includegraphics[width=0.49\linewidth]{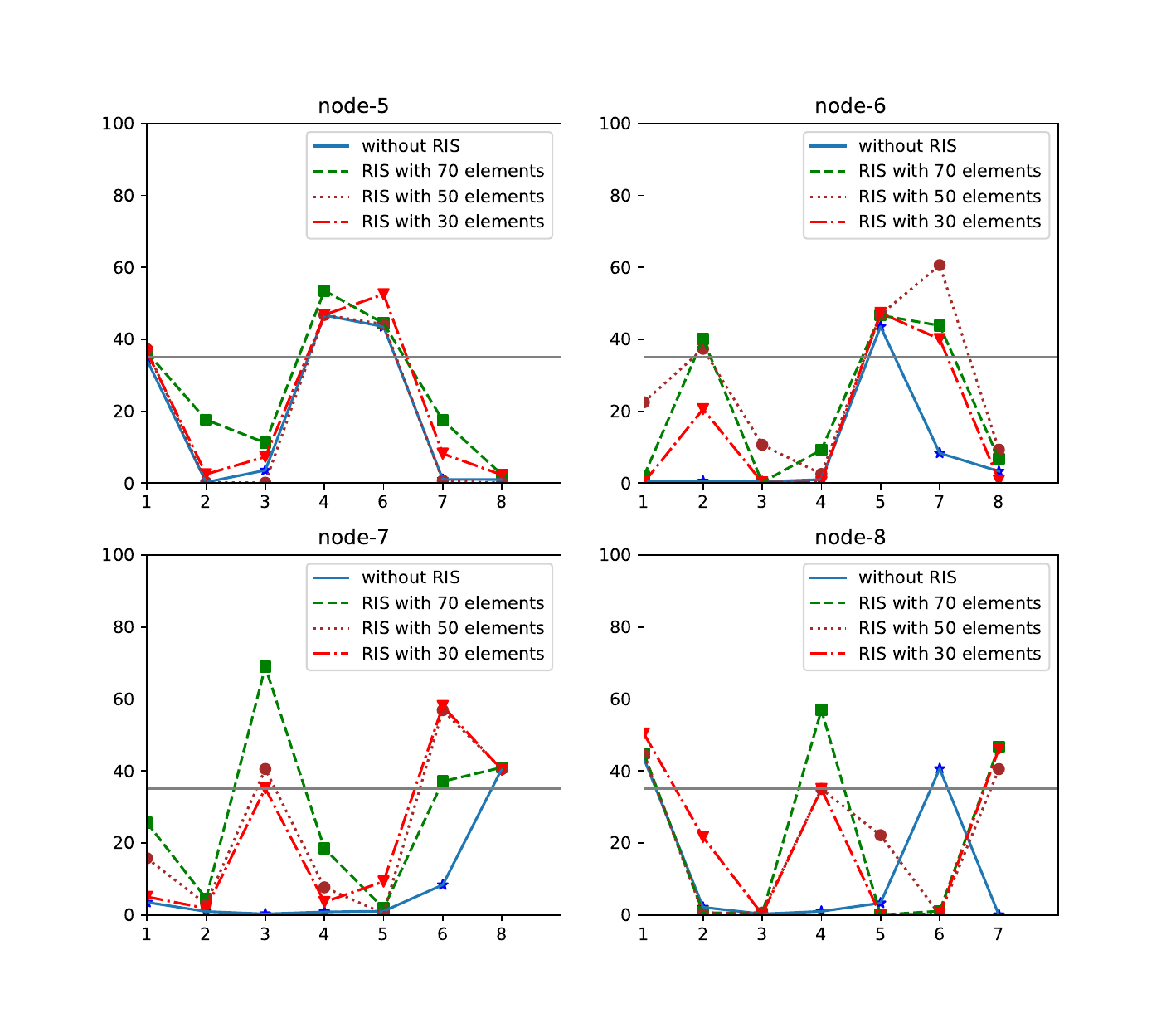}}}
\subfloat{\label{RT3_05}{\includegraphics[width=0.49\linewidth]{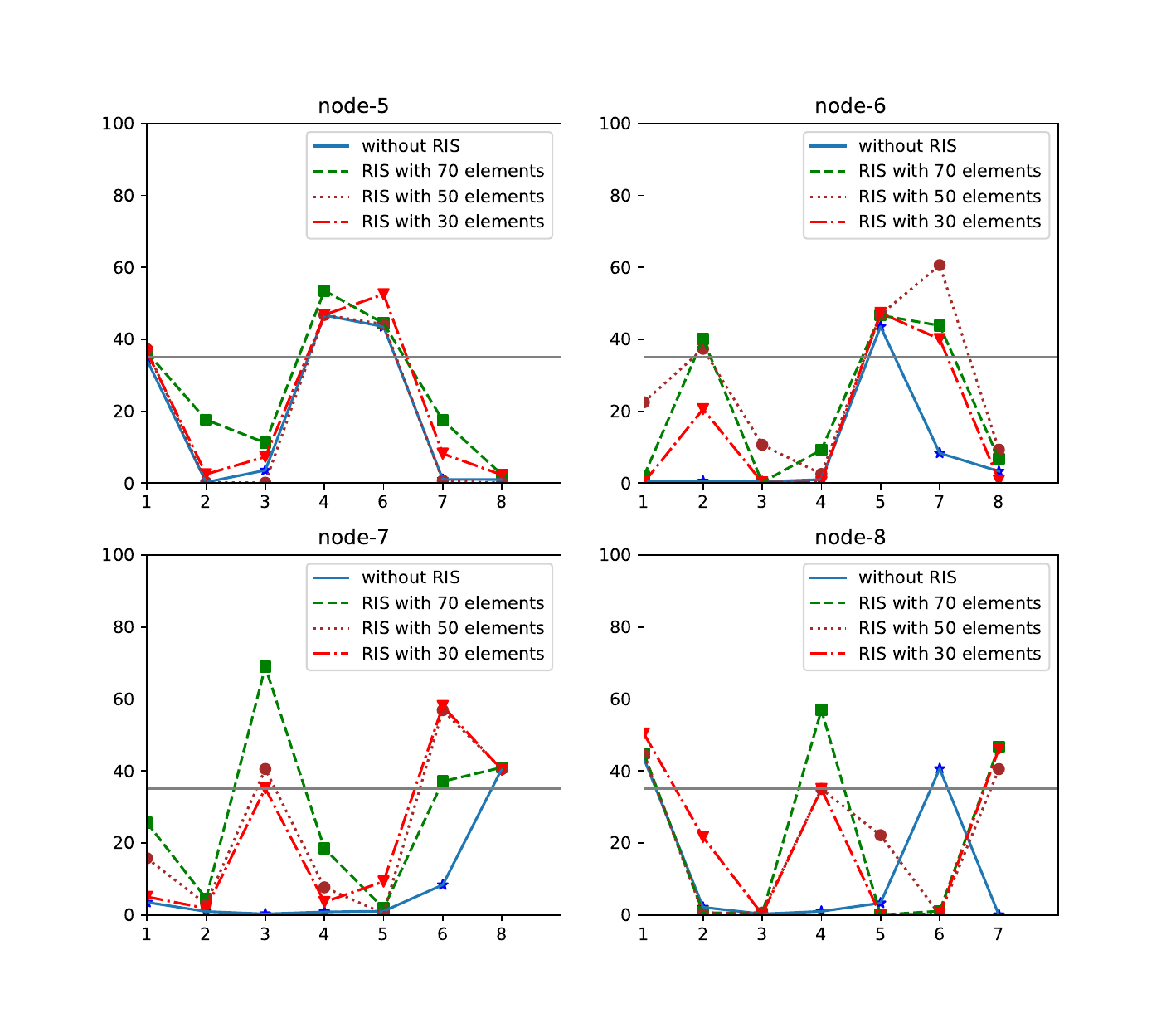}}} \hfill
\caption{RIS-controlled transmission rate of each node to remaining vertices, where the vertical axis represents the transmission rate, while the horizontal axis is the node ID. }
\label{dfasiughoeuwrhgdfsasdPP}
\end{figure*}

The tolerable transmission delay is $0.2682$ in the initial topology, i.e., Fig.~\ref{RIS_scenario1}.
Fig.~\ref{ConvergencgeasDUJ}b is verified initial topology convergence where the distributed FL is converged with $\tau_{max}=0.25$.
The distributed FL is divergent when $\tau_{max}=0.27$.
Fig.~\ref{ConvergencgeasDUJ}c investigates $\tau_{max}$ for the traditional FL convergence.
Generally, the traditional FL instructs a star topology to implement the DL model integration procedure, as shown in Fig.~\ref{TFL_scenario}. 
We get the tolerable delay $\tau_{max} = 0.1963$ of the star topology for the traditional/centralized FL convergence.
Fig.~\ref{ConvergencgeasDUJ}c confirms the delay convergence for star topology.
Therefore, it confirms the tolerable transmission delay growth obtained by topology optimization, reducing distributed algorithms' communication resource consumption.

Fig.~\ref{dfasiughoeuwrhgdfsasdPP} demonstrates the transmission rate from each node to all remaining vertices in the revised topology.
With the aid of RIS, we can construct and deconstruct the links by altering the transmission rate.
For instance, comparing the revised topology Fig.~\ref{RIS_scenario2} to the initial topology Fig.~\ref{RIS_scenario1}, node $1$ deconstructs the link with node $3$ and constructs the link with node $7$, in which the nodes and cars have the same connotation and are not distinguished.
We investigate the transmission rate of node $3$ in Fig.~\ref{dfasiughoeuwrhgdfsasdPP}.
Wherein the curve without RIS at the node $1$ has a high transmission rate of $30.98$ Mbps.
And the curve without RIS at the node $7$ is only $0.29$ Mbps.
In contrast, the curve with RIS can improve the transmission rate with the node $7$ and suppress the transmission rate with the node $1$ through the RIS phase shift optimization.
Applying the RIS can significantly enhance constructive and deconstructive performance. 
Further, the horizontal line in Fig.~\ref{dfasiughoeuwrhgdfsasdPP} represents the available transmission rate requirement for distributed FL convergence, i.e., $\frac{2 \Lambda \lambda_{max}}{\pi}$.
The transmission rate of each constructive link should exceed the available rate requirement.
As the experiments showed, the construction and deconstruction effects acquired by RIS are prominent and enhanced with the number of RIS elements.

We design a direct RIS control algorithm to demonstrate the effectiveness of DDPG algorithms. 
The channel matrix from a flying car $n$ to RIS is $h_{n,RIS}$, the channel matrix from RIS to a flying car $m$ is given as $h_{RIS,m}$, and the channel matrix between vehicles is $h_{n,m}$.
The phase shift matrix $X$ of RIS is the optimal variable to solve.

The direct RIS control algorithm only assesses the deconstruction link controls, which carries $h^H_{RIS,m} X  h_{n,RIS}+h_{n,m}=0$.
To simplify symbolic expression, we set $h^H_{RIS,m} = J$ and $h_{n,RIS} = B$.
And $h^H_{RIS,m} X h_{n,RIS}$ can be rewritten as,
\begin{equation}
   \left[ {\begin{array}{*{20}{c}}
  {{J_{11}}{X_{11}}}&{...}&{{J_{1m}}{X_{mm}}} \\ 
  {...}&{...}&{...} \\ 
  {{J_{n1}}{X_{11}}}&{...}&{{J_{nm}}{X_{mm}}} 
\end{array}} \right]\left[ {\begin{array}{*{20}{c}}
  {{B_{11}}}&{...}&{{B_{1p}}} \\ 
  {...}&{...}&{...} \\ 
  {{B_{m1}}}&{...}&{{B_{mp}}} 
\end{array}} \right]
\end{equation}
\noindent It indicates that the element in row $i$ and column $j$ of matrix $JXB$ holding
\begin{equation}
  {J_{i1}}{X_{11}}{B_{1j}} + {J_{i2}}{X_{22}}{B_{2j}} +  \cdots  + {J_{im}}{X_{mm}}{B_{mj}} = -h_{n,m}^{ij},
\end{equation}

\noindent where $h_{n,m}^{ij}$ is element in row $i$ and column $j$ of $h_{n,m}$.
These elements are scalar and can swap in order.
Therefore, we can element-wisely attain the optimal phase shift $X$.
Moreover, the RIS is passive reflection without amplifying or reducing the signal power.
To satisfy passive configurations, we scale each element of $X$ to a result with the modulus of $1$.
However, this operation will incur some errors.
\begin{figure}[h]
\centering
     \includegraphics[width=.42\textwidth]{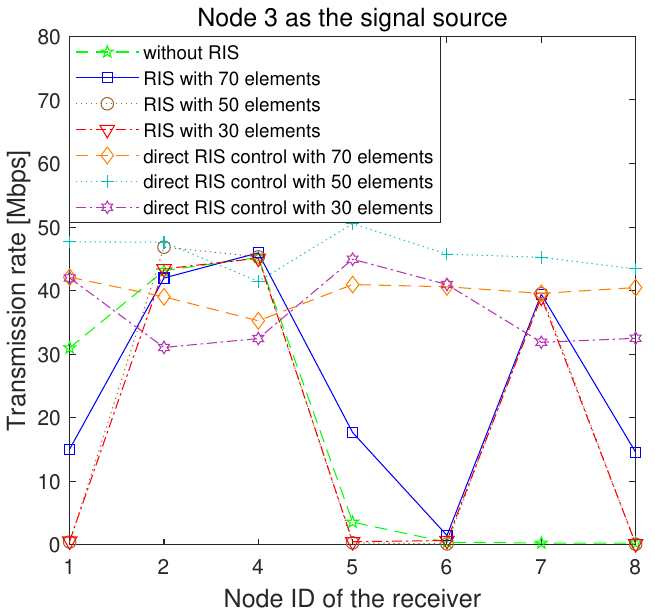} 
     \caption{RIS control algorithms comparison at node $3$.} 
\label{node3}
\end{figure}

Fig.~\ref{node3} illustrates the performance comparison between the DDPG RIS control and direct RIS control algorithms.
According to the topological criteria, node $3$ should construct the link with node $7$ while deconstructing the link with node $1$, as depicted in Fig.~\ref{RIS_scenario2}.
Wherein, node $3$ connects with node $2$, node $4$, and node $7$.
The DDPG RIS control algorithm enhances the transmission rate of these connected nodes and suppresses the transmission rate with node $1$. 
But the direct RIS control algorithm cannot fully suppresses the transmission rate of node $1$, node $5$, and other deconstructive links.
An interesting phenomenon is shown in Fig.~\ref{node3}. As the number of RIS elements increases, suppressing the transmission rate of deconstructive links is insufficient, like the link with node $1$. 
The possible reason is that both link construction and deconstruction are required in node $3$. 
The complexity grows with the number of RIS elements, leading to the underfitting training problem. 
We can solve this problem by introducing a larger DL model to train the RIS phase shift.

After packing the memory, we record the expense time of one training epoch averaged from $3000$ epochs. 
The time consumption with different RIS elements is illustrated in Fig.~\ref{time_show}. 
As the number of RIS elements gradually grows, the time consumption rises. The reason is that the actor-critic network's input dimension $(2M + n)\times 1$ increases with the number of RIS elements $M$. It yields growth in computing complexity.

\begin{figure}[h]
\centering
     \includegraphics[width=.4\textwidth]{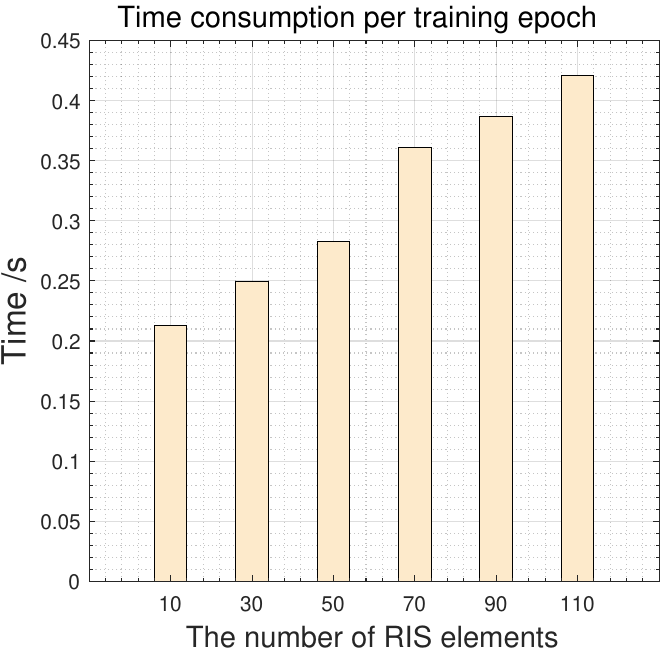} 
     \caption{Time consumption for one training epoch with different numbers of the RIS elements. } 
\label{time_show}
\end{figure}

Hereafter, we evaluate the perception accuracy of multi-view learning with different FL strategies and topologies in Fig.~\ref{PreceptionAcc}.
It demonstrates that the performance of distributed FL with revised topology is much higher than that of the initial topology.
The performance of distributed FL with the initial topology is even worse than the DL model without model sharing, as shown in Fig.~\ref{PreceptionAcc}. In that, the distributed FL with arbitrary topology does not ensure the convergence of the distributed parameters sharing.
The traditional FL, i.e., the centralized FL, has the highest accuracy at $30$ epochs due to the integrator's centralized DL model parameter aggregation.
However, according to Fig.~\ref{ConvergencgeasDUJ}, the centralized FL has the strictest tolerable transmission delay requirements and challenging communication resource assignments. 
Moreover, due to the centralized DL model aggregation, a network transmission bottleneck is prone at the integrator. A platoon cannot preserve an integrator in high dynamic UAM scenarios for the long term, which makes centralized FL unavailable.
\begin{figure}[h]
\centering
     \includegraphics[width=.45\textwidth]{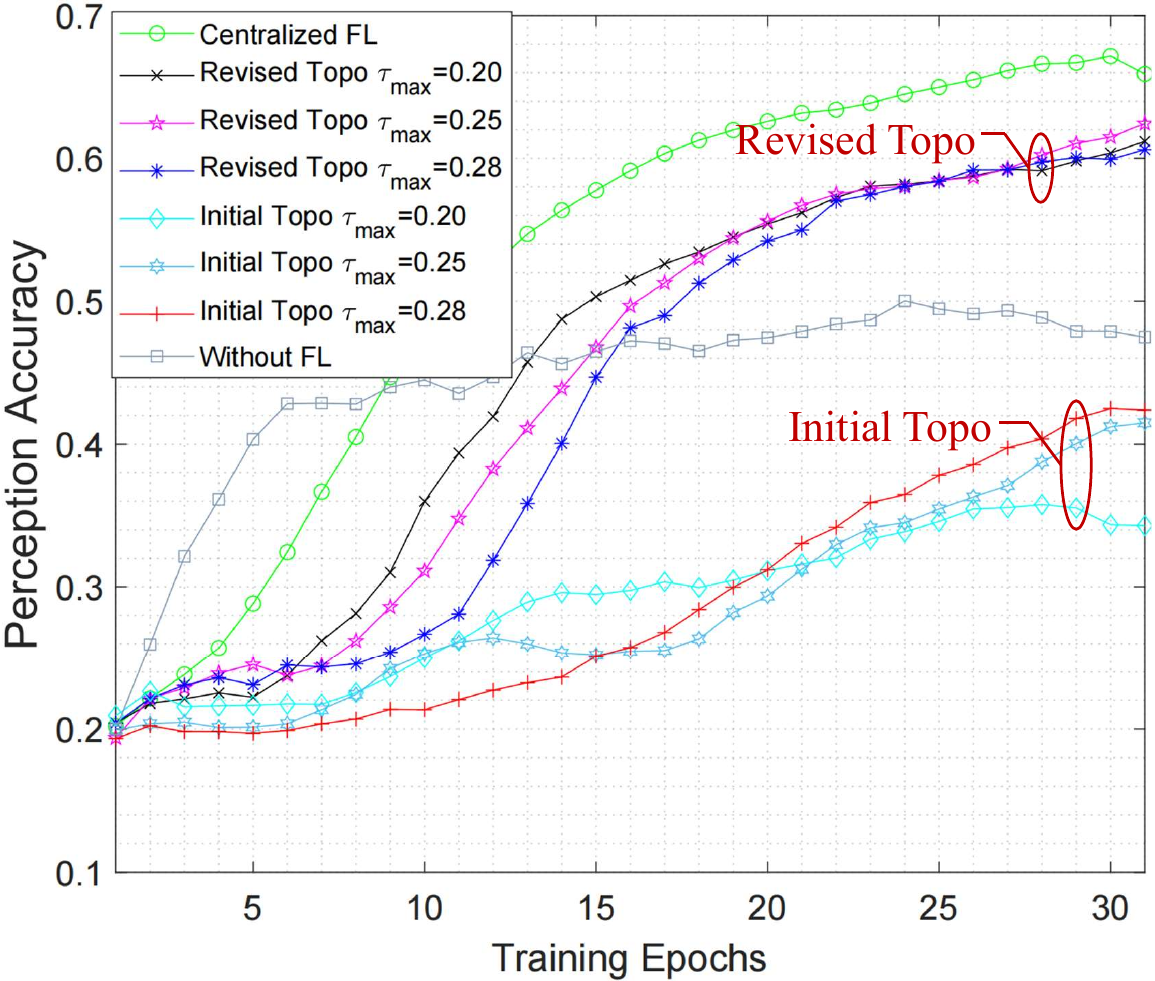} 
     \caption{FL performance with different FL fashions and topologies. } 
\label{PreceptionAcc}
\end{figure}

\begin{figure}[h]
\centering
     \includegraphics[width=.45\textwidth]{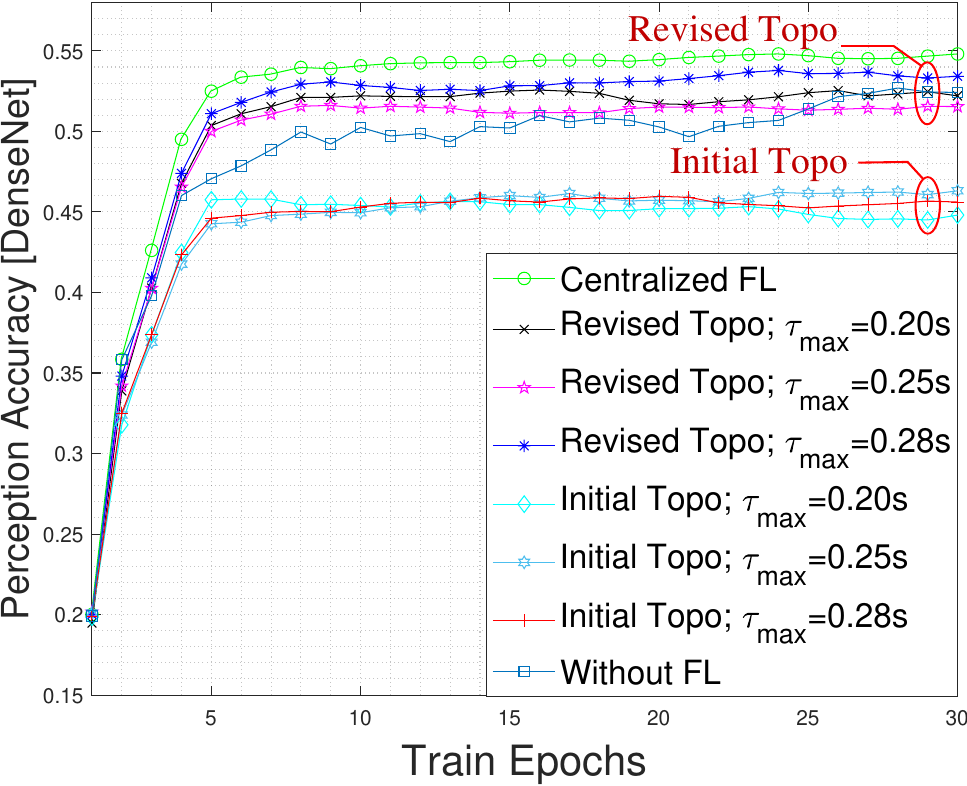} 
     \caption{Distributed FL performance with DenseNet. } 
\label{PreceptionAcc_DenseNet}
\end{figure}

To confirm the feasibility of the proposed distributed FL scheme, we have testified the other neural network, i.e., DenseNet.
Wherein, DenseNet is one type of convolutional neural network developed from ResNet. It is characterized by dense connections between different layers \cite{8099726}. This dense neural network has performed well in image recognition and computer vision tasks \cite{8331861Zhicheng}. It is also proper for resource-limited devices and embedded systems due to the compacted neural network structure \cite{9158008Yida}.

The performance of different distributed FLs with DenseNet is illustrated in Fig.~\ref{PreceptionAcc_DenseNet}, in which the traditional centralized FL and the individual DenseNet (Without FL) are the benchmarks to evaluate the diverse FL performances. The curve trends of the distributed FL with DenseNet are similar to those of MobileNet.
For instance, the performance of the revised topology is also higher than that of the initial topology. And the converged perception accuracy increases with $\tau_{max}$ reduction. 
It verifies the efficiency of our proposed distributed FL scheme in diverse neural networks.

However, after several iterations, the performance of individual DenseNet (Without FL) is proximate to the performance of the revised topology. The reason is that the learning structure of DenseNet is more complex than MobileNet's. The model volume of DenseNet is $27.3$MB while the volume of MobileNet is only $6.2$MB. Therefore, the individual DenseNet has qualified learning and functional fitting abilities. It gets comparable performance to the collaborative DenseNet with distributed FL.

Generally, our proposed distributed FL can improve the learning performance of individual DL models. However, as the DL model volume increases, the collaborative gain of distributed FL becomes smaller due to the adequate performance of the individual big-volume models. In addition, the communication overhead of distributed FL also increases with the shared model volume. It reveals that only lightweight DL models can fully leverage the benefits of distributed FL frameworks.

Fig.~\ref{Ring} illustrates the performance comparison between proposed distributed FL and the ring-based distributed learning.
Ring-based distributed learning is widely investigated in academics and industries. Baidu Company announced a ring-based Allreduce algorithm to accelerate the training of distributed DL models \cite{baiduddd}.
To implement a ring-based distributed learning, nodes are arranged in a unidirectional ring connected by communication links. Each node receives and forwards trained parameters to its neighbors at each training epoch.

\begin{figure}[h]
\centering
     \includegraphics[width=.45\textwidth]{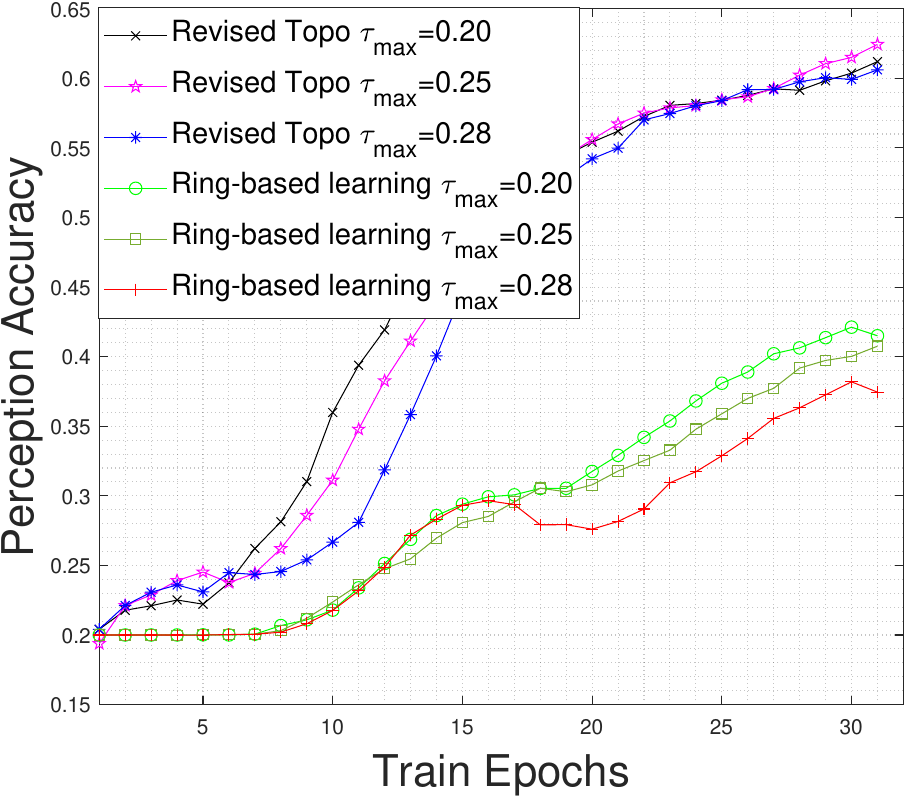} 
     \caption{Proposed distributed FL vs. Ring-based distributed learning. } 
\label{Ring}
\end{figure}

We can see that the perception accuracy of our proposed distributed FL with revised topology outperforms that of the ring-based distributed learning. in Fig.~\ref{Ring}.
The reason is that the second small eigenvalue $\lambda_2$ of the Laplacian matrix of the ring network is equal to 0.5858. It is much smaller than $\lambda_2$  of our proposed scheme which is 2. 
However, the larger $\lambda_2$ incurs the faster convergence speed of the parameter sharing. Thus, the ring-based distributed learning may not be eligible for the delay network. This performance comparison also verifies the efficiency of our proposed topological criteria.

Fig.~\ref{DFL_Delay} presents the distributed FL performance with different tolerable transmission delay $\tau_{max}$ in the revised topology. Perception accuracies of the curves with $\tau_{max} \le \frac{\pi}{2\lambda_{max}}$ are much higher than that of curves with $\tau_{max} > \frac{\pi}{2\lambda_{max}}$, where $\lambda_{max}$ is the largest eigenvalue of the revised topology Laplacian matrix. The reason is that the distributed FL convergence is guaranteed as the transmission delay does not exceed the tolerable delay $\tau_{max}$. With the transmission delay over $\tau_{max}$, the DL model sharing of distributed FL may not converge to an identical DL model, which deteriorates the DL model perception accuracy.

\begin{figure}[h]
\centering
     \includegraphics[width=.45\textwidth]{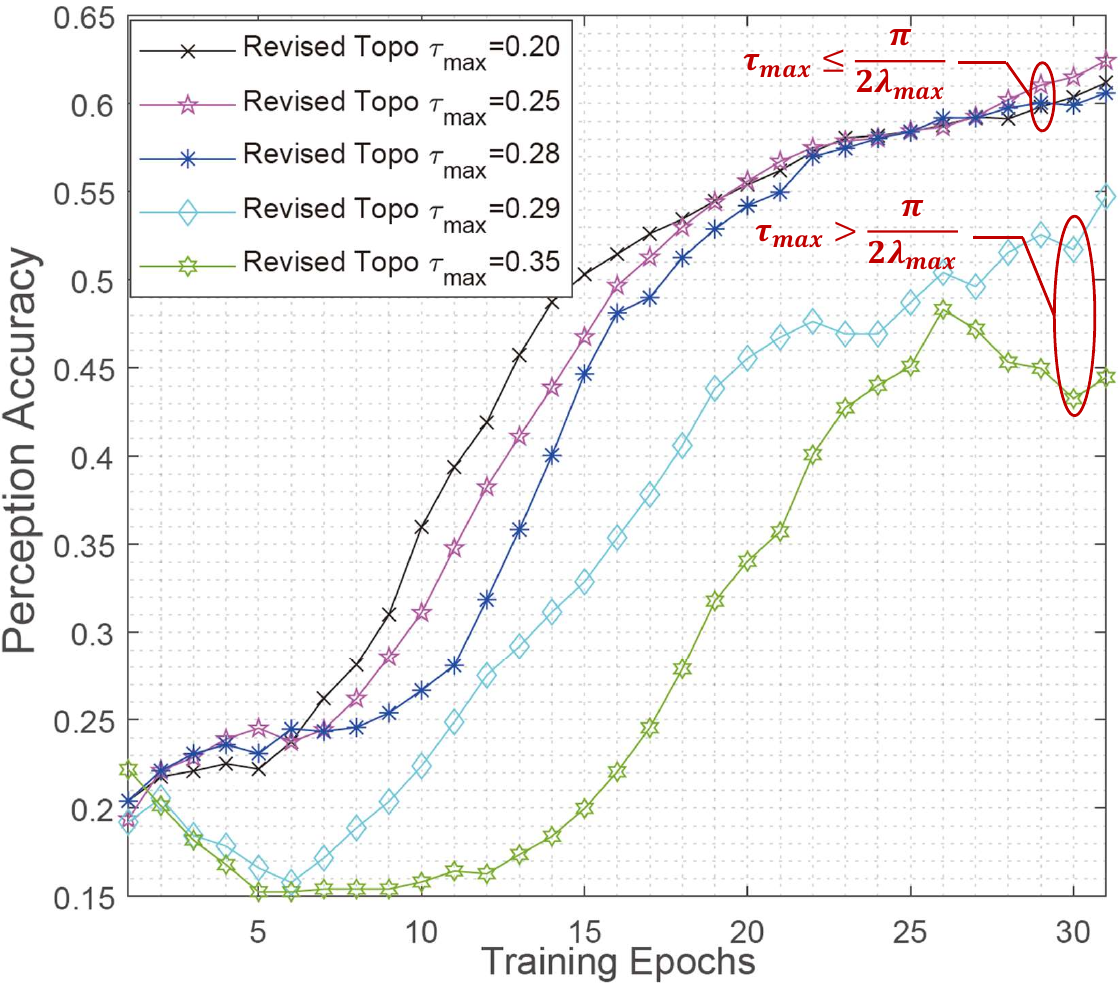} 
     \caption{Distributed FL performance with different tolerable transmission delay $\tau_{max}$ in the revised topology.  } 
\label{DFL_Delay}
\end{figure}

\section{Conclusion}
This paper proposed a RIS-empowered topological control scheme for distributed FL in UAM. It is applied to promote the onboard learning capacity of flying cars in the metropolis. With RIS's link construction and deconstruction abilities, flying cars can reshape their communication network based on the proposed topological criteria. These criteria minimize the largest eigenvalue and maximize the second smallest eigenvalue of the Laplacian matrix of the vehicular network. It facilitates distributed FL convergence and reduces communication overheads. Subsequently, we propose a DDPG-based RIS control scheme, optimizing the phase shift matrix of the RIS to accommodate the transmission rate.

Compared to previous work \cite{9110869Chongwen}, our proposed DDPG scheme leverages the RIS constructive and deconstructive abilities to reshape the communication topology for distributed parameter sharing instead of beamforming enhancement in the MISO context. Moreover, the investigated scenario of \cite{9416239Sixian} only contains one legitimate user and one eavesdropper. In contrast, we explored RIS link modification to strengthen the transmission rate of multiple constructive users and weaken that of multiple deconstructive users, simultaneously. Overall, most RIS retrospectives are dedicated to transmission enhancement in the physical layer. This paper originally leveraged the RIS functions in the network layer to tailor the network topology for fast convergence of distributed FL. This work provisions the possibility of utilizing distributed collaboration learning, which shows great potential for swarm intelligence in heterogeneous UAM systems.

In this paper, we only investigate the distributed FL in the network with a low topological change. When the network topology changes fast, it could impact the convergence of the distributed FL. Moreover, while the time consumption of topology change is less than the transmission delay, existing theories do not guarantee the convergence of distributed FL. In future work, we will explore the converged distributed FL in the network with fast topological change.


\appendices
\footnotesize
\bibliography{biblio}
\end{document}